\pgfplotsset{compat=1.10}
\newcommand{\calR}{\mathcal{R}}
\newcommand{\RR}{\mathbb{R}}
\newcommand{\NN}{\mathbb{N}}
\let\epsilon\varepsilon
\DeclareMathOperator{\dist}{\mathrm{dist}}
\newcommand{\R}{\mathbb{R}}
\newcommand{\N}{\mathbb{N}}
\newcommand{\Z}{\mathbb{Z}}
\newcommand{\E}{\mathbb{E}}
\newcommand{\TV}{\mathsf{TV}}
\renewcommand{\P}{\mathbb{P}}
\newcommand{\M}{\mathcal{M}}
\newcommand{\mix}{\mathrm{mix}}
\newtheorem{theorem}{Theorem}
\newtheorem{lemma}{Lemma}
\newtheorem{corollary}{Corollary}
\newtheorem{definition}{Definition}
\newtheorem{proposition}{Proposition}
\newlength{\algofontsize}
\begin{document}
	\algrenewcommand\algorithmicrequire{\textbf{Input:}}
	\algrenewcommand\algorithmicensure{\textbf{Output:}}
	
	\title{The Convergence Rates of Blockchain Mining Games:\\ A Markovian Approach
    \vspace{.3cm}
	 }
	\author{Alejandro Jofr\'e
	\thanks{Department of Mathematical Engineering, Universidad de Chile. {\tt ajofre@dim.uchile.cl}}
	\and Angel Pardo			
	\thanks{Department of Mathematical Engineering, Universidad de Chile. {\tt aapardo@dim.uchile.cl}}
	\and David Salas
	\thanks{Institute of Engineering Sciences, Universidad de O'Higgins. {\tt david.salas@uoh.cl}}
	\and Victor Verdugo
	\thanks{Institute of Engineering Sciences, Universidad de O'Higgins. {\tt victor.verdugo@uoh.cl}}
	\and Jos\'e Verschae
	\thanks{Institute of Mathematical and Computational Engineering, Pontificia Universidad Cat\'olica. {\tt jverschae@uc.cl}}
	}
\date{\vspace{-1em}}

\maketitle
\begin{abstract}
Understanding the strategic behavior of miners in a blockchain is of great importance for its proper operation. A common model for mining games considers an infinite time horizon, with players optimizing asymptotic average objectives. Implicitly, this assumes that the asymptotic behaviors are realized at human-scale times, otherwise invalidating current models. 
We study the mining game utilizing Markov Decision Processes. Our approach allows us to describe the asymptotic behavior of the game in terms of the stationary distribution of the induced Markov chain. We focus on a model with two players under immediate release, assuming two different objectives: the (asymptotic) average reward per turn and the (asymptotic) percentage of obtained blocks.

Using tools from Markov chain analysis, we show the existence of a strategy achieving slow mixing times, exponential in the policy parameters. This result emphasizes the imperative need to understand convergence rates in mining games, validating the standard models. Towards this end, we provide  upper bounds for the mixing time of certain meaningful classes of strategies. This result yields criteria for establishing that long-term averaged functions are coherent as payoff functions. Moreover, by studying hitting times, we provide a criterion to validate the common simplification of considering finite states models.
For both considered objectives functions, we provide explicit formulae  depending on the stationary distribution of the underlying Markov chain. In particular, this shows that both mentioned objectives are not equivalent. Finally, we perform a market share case study in a particular regime of the game. More precisely, we show that an strategic player with a sufficiently large processing power can impose negative revenue on honest players. 
\end{abstract}

\thispagestyle{empty}
\newpage

\section{Introduction}

As the use of cryptocurrencies continue to grow, understanding the behavior of its miners becomes paramount. A typical approach to analyze the Bitcoin mining process considers a game-theoretical setting where selfish miners aim to optimize an average objective. This has been considered in the influential work by Eyal and Sirer~\cite{eyal2014majority}, Kiayias et al.~\cite{Koutsoupias2016} and several subsequent works~\cite{eyal2015miner,sapirshtein2016optimal,nayak2016stubborn,goren2019mind,koutsoupias2019blockchain,arenas2020cryptocurrency,leonardos2020oceanic}. Most models study this setting as a sequential game with infinite time horizon, where each player optimizes an asymptotic payoff. An implicit assumption on this definition, is that these asymptotic objectives are realized at human time scales. Our long term purpose is to understand the convergence rates to steady states of the mining game, hence validating mathematically these definitions if possible, and more importantly, understanding their limitations. 

At an intuitive level, the main argument for the pertinence of such payoff functions and the infinite horizon assumption is a natural Markovian structure of the sequential game: once blocks are validated by all miners, they become part of the official blockchain and they are no longer in competition. However, the underlying Markovian process that miners face has, to the best of our knowledge, never been carefully formalized.
The ergodic properties of the underlying Markov process allow to study the payoff functions asymptotically. However, Markovian properties alone do not guarantee fast convergence. Thus, a natural question arises: Are asymptotic payoffs actually observable in a time scale that is reasonable for the miners? That is, can we see the long term behavior at human time scale? Once this question is on the table, the assumption of infinite time horizon, together with other ones like constant rewards and no simultaneous mining, become questionable. Indeed, taking as example the Bitcoin protocol (well surveyed in \cite{survey2016bitcoin}), the rewards of mining a block (which are new Bitcoins that are created) halves every four years, starting at 50 BTC on 2009. By July 2021, the reward of a block is 6.25 BTC and it is projected that new Bitcoins rewards will end the year 2140, when the reward of a block becomes $10^{-8}$ BTC (see \cite{survey2016bitcoin}). This discount rate tells us that the assumption of constant rewards is valid only if long run means less than 4 years.  

Regarding the Markovian structure of the sequential game, it is usual in the literature~\cite{eyal2014majority,Koutsoupias2016,eyal2015miner,sapirshtein2016optimal,nayak2016stubborn,goren2019mind,koutsoupias2019blockchain,arenas2020cryptocurrency,leonardos2020oceanic} to consider finitely many possible states in the chain representation of it. These states are given by the blocks that have not been recognized by all miners. The assumption rests on the original blockchain white paper \cite{nakamoto2019bitcoin}, where it is stated that blocks in dispute tend to be resolved naturally by the consensus of honest miners, choosing one branch as the official an leaving the blocks in the other ones as orphans. Thus, truncated models are applied, where it is assumed that no strategic miner will persevere beyond certain threshold $d$: If another branch reaches the length $d+1$, he or she will capitulate, abandon his or her branch as orphan. However, this threshold is not part of the blockchain protocol, and miners could potentially deviate from this assumption. Thus, another natural question is when truncated models are consistent, in the sense that they predict correctly the evolution of the sequential game.

In order to study these questions, we first formalize the mining game by the lens of a Markov Decision Process (MDP). This is not an easy task, mainly due to the fact that strategic miners can hide information. Thus, as a starting point for this line of work, we focus our attention on mining games under perfect information, or, as it was called in \cite{Koutsoupias2016}, \emph{immediate release}. Once this is done, we will use the explicit Markov chain structure to answer the questions described above in using the theory of mixing and hitting times. 

\subsection{Our Contribution}

We start by formalizing the mining game through the lens of MDP's, allowing us to analyze the mining game through the theory of Markov chains.
In particular, we can describe some of the relevant payoff functions considered in the literature by understanding the stationary distribution of the underlying chain. 
The formal definition of the game and the payoffs can be found in Section~\ref{sec:Pre}, while the Markovian approach is formalized in Section~\ref{sec:MarkovModel}. 

We consider the case of two players. In this case, both players can mine either the same block, or mine their own branch of the blockchain. After each mining race, they can either decide to keep mining their branch or capitulates to some block of the other branch. Hence, the state of the game are given by two parameters $(\ell_1,\ell_2)$, which represent the length of the two branches from the last common ancestor. The states of the Markov chains can be represented within the integer lattice, which allows us to find explicit expressions of the stationary distributions by solving a combinatorial problem corresponding to counting interior paths. 
We exploit this representation to provide an example showing that certain strategies can have large mixing times. This highlights the importance of understanding mixing times, and hence the validity of the models, for different strategies.
We are also able to give upper bounds on the mixing time, which helps us evaluate convergence rates to the stationary regime. We evaluate this measure for the case in which one player plays honestly, while the other capitulates if the length difference of the two branches is larger than a given parameter $\overline{g}$. Our  upper bound of the mixing time depends on the processing power of the strategic player and $\overline{g}$. The proof is obtained by coupling techniques. 
Our analysis allows us to validate the common assumption of asymptotic payoff functions for these strategies. 

Through our representation, we formalize a notion of \emph{safety}, which captures the idea of when selfish mining can be based on simplified models, at no risk of violating the blockchain protocol, that is, parallel validated blocks. 
This is performed by studying the hitting time of the critical state of having two parallel branches of length $d$. We provide sufficient conditions under which a strategy can be safely played, and we show the existence of unsafe strategies. The study is carried out by bounding the number of interior paths from the initial states to the critical points in the lattice representation.
The mixing and hitting time study can be found in Section \ref{sec:Times}.

Finally, in Section \ref{sec:Revenues} we perform a market share study over a particular family of strategies, based on our Markovian model and the convergence analysis, in order to understand the impact on the profit of the miners according and how it behaves depending on the miner payoff function. In this setting, we study the two most common payoff functions, which are the asymptotic ratio of owned blocks, and the asymptotic revenue per time unit. We observe that these two functions are far from being equivalent. Furthermore, we observe that in our regimes, a sufficiently powerful strategic miner can apply a strategy that makes other miners to have negative profits, and therefore pushing them out of the mining game.

\subsection{Related Literature}
Since the introduction of the Bitcoin blockchain protocol in 2008 by Nakamoto \cite{nakamoto2019bitcoin}, there has been an intensive study around the development and understanding of the blockchain protocol and its variants. We refer to Abadi and Brunnermeier \cite{abadi2018blockchain} for an economic analysis of the blockchain protocol, and to \cite{survey2016bitcoin} for a technical review.
Selfish mining was originally studied by Eyal and Sirer \cite{eyal2014majority}, and they show the existence of a non-honest mining strategy that is profitable if the hash power of the non-honest players is at least 33.3\%.
Particularly relevant to our setting is the work by Kiayias et al. \cite{Koutsoupias2016}, that model strategic mining through the lens of a stochastic game under complete information.

Recently, Marmolejo-Coss\'io et al. \cite{marmolejo2019competing} extend the analysis of Eyal and Sirer to the case of multiple, not necessarily colluded, selfish miners, showing that they have incentives to deviate in block from honest mining. 
Arnosti and Weinberg \cite{arnosti2019bitcoin} study the Bitcoin mining  through the lens of investment costs vs market concentration, finding evidence of an oligopolistic behavior.
The study of strategic and selfish mining is a very active area; see e.g.~\cite{eyal2015miner,sapirshtein2016optimal,nayak2016stubborn,goren2019mind,koutsoupias2019blockchain,arenas2020cryptocurrency,leonardos2020oceanic}.
Recently, there have been efforts in studying variations of the blockchain protocol and to analyze the impact in terms of mining incentives \cite{birmpas2020fairness,pass2017fruitchains}. 


\section{Preliminaries: The Mining Game \label{sec:Pre}}
In this section we describe the model of the sequential game that is played by the miners in the Blockchain protocol. This model was already implicit in~\cite{eyal2014majority} and it is fully described in~\cite{Koutsoupias2016}. 
We consider a sequential game of $N$ players called \textit{miners}, each miner having a computing capacity. 
We assume that the computing capacity remains constant during the game and that players stay in the mining game if it is profitable for them. That is, we do not consider \emph{in-and-out} strategies, where some players might strategically stop mining for some turns to temporarily reduce the total computing power of the game (this strategic behavior has been studied in \cite{GrunspanPerez-Marco2018}). 

At each turn $n$, the blockchain is given by a public  rooted directed tree of blocks $T$. Each block $B$ is labeled with $i\in \{1,\ldots,N\}$, which is the player that wrote it. Miners are rewarded for writing blocks. At each turn, each player $i$ has a private tree $T_i$ that might differ from $T$ and a current mining block $B_i$ in $T_i$.  Players are trying to solve a crypto-puzzle associated to their mining blocks, in order to find a key to attach (write) a new block.  

\begin{definition} At each turn, we say that a player $i\in \{1,\ldots,N\}$ has won the mining race if he or she is the first one in solving the crypto-puzzle associated to his or her current mining block $B_i$.
	We suppose that the mining race has only one winner almost surely, and we set $p_i$ to be the probability that player $i\in \{1,\ldots,N\}$ is the one winning the mining race.  
\end{definition}

In practice, there is a delay between the finding of a new block and the communication to all miners of this new block.
If a second miner finds a block in this time window, then this second block is considered as simultaneous. 
Here, however, we omit the possibility of multiple players finding simultaneous blocks. 
The probabilities $p_1,p_2,\ldots,p_N$ follow by modeling the time needed to solve the crypto-puzzles as exponential random variables, depending on the fixed computational capacity of each player and the difficulty of the puzzle.
The sequential game is then played as follows:
\begin{enumerate}[label={\texttt{\color{blue}Step \arabic*.}},ref= \texttt{{\color{blue}Step \arabic*}},align=left, leftmargin=*]
	\setcounter{enumi}{-1}
	\item\label{Step0}  At the beginning of the game ($n=0$), there is a public tree $T$ of only one block.
	Each player starts mining this block.
	\item \label{Step1}  A mining race is played.
	\item\label{Step2} The player $i\in \{1,\ldots,N\}$ that has found a block adds it to his or her private tree, $T_i$, and updates the current mining block to the new block found. Then, a revealing phase starts:
	\begin{enumerate}[label*=\texttt{{\color{blue}\arabic*:}}, ref = \texttt{{\color{blue}\ref{Step2}.\arabic*}},align=left, leftmargin=*]
		\item\label{RevealPhase1} Each player $i\in \{1,\ldots,N\}$ reveals part of its private tree to add some elements in $T_i\setminus T$ to the public tree. 
		\item \label{RevealPhase2} Every player updates the public tree. If the public tree is the same as before, the reveal phase ends. If not, we go back to \ref{RevealPhase1}.
	\end{enumerate}
	\item\label{Step3} Each player $i\in \{1,\ldots,N\}$ decides a (possibly new) block to mine, and updates its current mining block $B_i$ to the selected block. If the depth of $T$ is less than a maximum length $D_{\max}$, the game goes back to \ref{Step1}. Otherwise, the game ends.
\end{enumerate}
The mining game is said to be of immediate release if every player $i\in \{1,\ldots,N\}$ decides to reveal his private tree at the revealing phase, that is, at the beginning of \ref{Step3}, $T_i=T$ for every player $i\in \{1,\ldots,N\}$. In the following we define a particular type of ``honest" strategy, corresponding to a player immediately releasing any mined block and selecting one of the deepest
nodes.
This type of strategy was studied by Kiayias et al.~\cite{Koutsoupias2016}.
\begin{definition} We say that a player $i\in \{1,\ldots,N\}$ is playing the Frontier strategy if, at \ref{Step3}, he or she selects a leaf of the deepest level of the public tree $T$. If one of the leaves of the deepest level belongs to $i$, then  $i$ selects that block.
\end{definition}


\subsection{Payoff Functions \label{subsec:PayoffFunctions}}

In order to define the objective functions, we need to understand how blocks become part of the official blockchain. The blockchain protocol states that the official transaction history must be the longest path in the public tree $T$, which is the chain with most proof-of-work (see \cite{nakamoto2019bitcoin,survey2016bitcoin}). However, when forks appear (due to strategic mining or simultaneous mining), there is an ambiguity in the definition of the longest chain. Thus, users of the cryptocurrency consider a block valid once this block has a certain amount of children in the tree. 
This delay on validation of transactions contained in a block is the key element to prevent double spend-attacks. We formalize this idea with the following definition.

\begin{definition}\label{def:Validation} A block $B\in T$ is validated if all miners only select blocks that are descendants of $B$. A mining game has maximum depth $d\in \N\cup\{\infty\}$ if once a block $B$ has a path of $d$ descendants, then it becomes validated.
\end{definition}
In practice, there is not an official depth $d$ in the cryptocurrency protocols. However, assuming that miners are not seeking to perform double-spend attacks \cite{rosenfeld2014analysis}, an implicit maximum depth $d$ is considered, obtained consensually to prevent ambiguity in the transactions' history. For example, in the Bitcoin protocol, blocks are paid once a path of length $d=100$ follows them, and it is a reasonable maximum depth to consider. This simplifying hypothesis has been used before in \cite{Koutsoupias2016}.
In what follows we consider the following parameters in the game:
\begin{enumerate}
	\item $D_{\max}$ is considered as $\infty$. Each player $i$ has a marginal cost $c_i\geq 0$ of mining per time unit. 
	\item Each block has a reward $r>0$, which is collected by the player that owns it once the block is validated. The marginal costs $c_1,c_2,\ldots,c_N$ and the rewards are normalized so $r=1$.
\end{enumerate}

For every positive integer $n$ and player $i$, we denote by $r_{i,n}$ the number of blocks won by player $i$ that are validated at the end of turn $n$. We set $\dist_n$ to be the number of blocks that are validated at the end of turn $n$, and $\tau_n$ to be the time length of turn $n$. We model the sequence $(\tau_n)_{n\in\N}$ as independent identically (exponentially) distributed variables. 
This structure yields a natural probability space $(\Omega_0,\mathcal{F}_0,\P_0)$ where
$\Omega_0 = (\{1,\ldots,N\}\times \R_+)^{\N}$, a sequence $\omega = (\omega_n,t_n)_{n\in\N}\in \Omega_0$ represents the outcomes of a sequence of mining races together with their time lengths, and
$\mathcal{F}_0$ is the product $\sigma$-algebra $\bigotimes_{n\in\N} \mathcal{P}(\{1,\ldots,N\}\times\mathcal{B}(\R_+))$, where $\mathcal{B}(\R_+)$ is the Borel $\sigma$-algebra.
Then, $\P_0$ is the unique probability measure satisfying that
$\P_0(\omega_n = i) = p_i$ for every $i\in \{1,\ldots,N\}$ and every positive integer $n$, where $\{\omega_n = i\}$ means that the player $i$ won the mining race at turn $n$.

With this probability space, the law of large numbers ensures that
$\sum_{k=0}^{n-1}\tau_k/n\rightarrow   \tau_b$ almost surely, where $\tau_b=\E_0(\tau_0)$.
Thus, $\tau_b$ is the averaged time needed to find a new block, which we set to be the averaged time length of a turn. The blockchain protocol adjust the difficulty of mining a new block periodically (see e.g. \cite{survey2016bitcoin}) in such a way that a target average constant time per block is obtained. We can model this property by assuming that
\begin{equation}\label{eq:adjustment}
	\lim_{n\to\infty}\E_0\left( \frac{\sum_{k=0}^{n-1}\tau_k}{1+\sum_{k=0}^{n-1}\dist_k }\right) = \overline{\tau},
\end{equation}
where $\overline{\tau}$ is a constant time called \emph{target} (in Bitcoin, $\overline{\tau}$ is set at 10 minutes). Strategic behavior modify the distributions of $(\tau_n)_{n\in\N}$ and the average time $\tau_b$ of mining a block, by influencing the difficulty adjustment (see \cite{GrunspanPerez-Marco2018}). However, if strategic players maintain their strategies and do not perform in-and-out attacks, it is reasonable to assume that the sequence $(\tau_n)_{n\in\N}$ is identically distributed and that $\tau_b$ remains constant.

Now, we are ready to present the possible objective functions for miners in the game. In this work, we consider two possible objectives that players seek to maximize, both in the long term: The expected revenue per turn and the expected ratio of validated blocks.
More specifically, the three criteria are the following:
\begin{enumerate}[label=(\roman*)]
	\item A player $i$ maximizes the asymptotic revenue per turn 
	$R_i = \displaystyle\lim_{n\to \infty} \frac{1}{n}\E\left(\sum_{k=0}^{n-1} r_{i,k} - c_i\tau_k  \right)$.
	\item A player $i$ maximizes the asymptotic ratio of validated blocks, that is, \[G_i = \lim_{n\to \infty} \E\left( \frac{\sum_{k=0}^{n-1} r_{i,k}}{1+\sum_{k=0}^{n-1}\dist_k}\right),\]
	with the additional constraint that $R_i\geq 0$. Here the cost can be neglected since \eqref{eq:adjustment} ensures that the average costs per validated block become asymptotically constant, equal to $c_i\overline{\tau}$.
\end{enumerate}
The first work considering selfish mining \cite{eyal2014majority} assumes that players try to maximize the ratio of owned blocks in the official chain, that is, each player $i$ is maximizing $G_i$ (i.e. their market share). This objective was also considered in other earlier influential work \cite{Koutsoupias2016}, and it has been generally accepted as one of the natural payoff functions in the mining game. As we will see, the choice of the objective function has a deep influence on the strategic behavior of the players. In a nutshell, maximizing $G_i$ might induce that players benefit by diminishing the total number of available blocks (i.e., reduce $1+\sum_{k=0}^{n-1}\dist_k$), and it is not necessarily related to maximizing the revenues $R_i$. 

The following proposition states that Frontier strategies are a Nash equilibrium when only two players are trying to maximize $R_i$ without difficulty adjustment. This result was already stated by \cite{GrunspanPerez-Marco2018} and here we recall it to show the impact of the payoff function. Other studies of profitability including the impact of difficulty adjustment can be found in \cite{profitability2020albrecher,profitability2020davidson}.
\begin{proposition}[{\cite[Theorem 4.4]{GrunspanPerez-Marco2018}}]
	\label{prop:Frontier-Nash}
	For $N=2$, when both players try to maximize their asymptotic expected revenue $R_1$ and $R_2$, Frontier strategy is a Nash equilibrium.
\end{proposition}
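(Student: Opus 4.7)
The plan is to establish a sharp two-sided bound on $R_i$: that any strategy profile $(\sigma_1,\sigma_2)$ satisfies $R_i(\sigma_1,\sigma_2) \le p_i - c_i\tau_b$, and that the Frontier-Frontier profile attains this bound with equality. The Nash equilibrium claim is then immediate: if the opponent plays Frontier, then the other player's payoff is capped at $p_i - c_i\tau_b$, a value already realized by Frontier, so no unilateral deviation can strictly improve.

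For the upper bound I would start from the observation that every validated block owned by player $i$ must have been mined by player $i$ in some earlier turn, and that each mined block contributes to the cumulative count at most once. Consequently, for any strategy profile,
\[
\sum_{k=0}^{n-1} r_{i,k} \;\le\; \bigl|\{k\in\{0,\dots,n-1\}\,:\, \omega_k = i\}\bigr|.
\]
Because $p_i$ is fixed by the computing capacity and not by the strategy, the winners $(\omega_k)_{k\in\N}$ remain i.i.d.\ under $\P_0$ with $\P_0(\omega_k=i)=p_i$, so the strong law of large numbers together with bounded convergence give $n^{-1}\E\bigl[\bigl|\{k<n : \omega_k=i\}\bigr|\bigr]\to p_i$. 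The absence of difficulty adjustment and of in-and-out deviations, hypothesised by the proposition, guarantees that $(\tau_k)_{k\in\N}$ stays i.i.d.\ with $\E[\tau_k]=\tau_b$; hence $R_i(\sigma_1,\sigma_2)\le p_i - c_i\tau_b$.

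For the matching lower bound at Frontier-Frontier I would argue that under this profile the public tree is always a single path. At each turn, the winner appends to the unique deepest leaf of $T$ and immediately reveals the new block; from the next turn on, both players mine this newly-revealed leaf (the winner by the own-block tie-breaking rule of Frontier, the loser because it is the unique deepest leaf of $T$). Hence no fork ever arises, and the $k$-th block added is validated exactly at turn $k+d$. This gives
\[
\sum_{k=0}^{n-1} r_{i,k} \;=\; \bigl|\{k\in\{0,\dots,n-1-d\}\,:\, \omega_k = i\}\bigr|,
\]
whose expected rate also converges to $p_i$. Combined with the upper bound, $R_i^{\mathrm{F,F}}=p_i-c_i\tau_b$, so Frontier is a best response to Frontier for each player and the profile is a Nash equilibrium.

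The delicate modeling step is that the constant $\tau_b$ must be insensitive to the opponent's strategy: this is precisely where the hypotheses ``without difficulty adjustment'' and ``no in-and-out play'' enter. Once these are granted, the remaining work is a pair of routine LLN applications, and the main intuition---under Frontier-Frontier no effort is ever wasted on orphaned blocks, so player $i$ collects every race he wins---does the heavy lifting. Beyond this, the only minor technicality is the negligible boundary correction from the last $d$ unvalidated blocks, which is swallowed by the asymptotic averaging.
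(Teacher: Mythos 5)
Your proposal is correct and follows essentially the same route the paper takes: the paper states Proposition~\ref{prop:Frontier-Nash} only by citation, but its proof of the cost-free analogue, Corollary~\ref{cor:EqWithoutCost}, rests on exactly your key inequality $\sum_{k<n} r_{i,k}\le |\{k<n:\omega_k=i\}|$ together with the observation that Frontier attains the resulting bound $p_i$. Your additional handling of the cost term --- noting that $n^{-1}\E\bigl[\sum_{k<n} c_i\tau_k\bigr]\to c_i\tau_b$ is strategy-independent absent difficulty adjustment, so the bound becomes $p_i-c_i\tau_b$ and is still attained by Frontier --- is the right way to upgrade that argument to the statement with costs.
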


\subsection{Discussion of the Model}
In this work we aim to build the foundations for the study of fast convergence rates of long-term payoff functions, as well as for the validity of truncated models. In particular, we aim to determine how strategic mining can be performed under such regimes.
The main assumptions of the model we presented in this section are the following:
\begin{enumerate}[label={{\color{blue}(A.\arabic*)}},ref= {\color{blue}A.\arabic*},align=left, leftmargin=*]
	\item\label{A1:Computational} The computational capacity of each agent is fixed, and therefore so it is his or her probability of winning a mining race. $D_{\max}$ is very large, and can be assumed to be $\infty$.
	\item \label{A3:FixedReward} The reward of adding a block to the official blockchain is fixed.
	\item \label{A4:Depth} There exists a maximum depth $d$, after which a block with $d$ descendants becomes validated by all miners.
	\item \label{A5:Simulataneous} There is no simultaneous mining. All miners play with immediate release, that is, there is no strategic revealing.
\end{enumerate}

Assumptions \eqref{A1:Computational}-\eqref{A3:FixedReward} are related with time scales and allow us to validate the payoff functions presented in Section \ref{subsec:PayoffFunctions}. The first two are quite standard and appear in several works studying the mining game \cite{Koutsoupias2016, marmolejo2019competing}. 
Assumption \eqref{A3:FixedReward} has been used in several studies, but it is definitely more debatable. Again in the Bitcoin protocol, as in February 2021, mined blocks are rewarded with $r=2.56$ BTC. The classic blockchain protocol has a discount rate on the reward of mining a block: in Bitcoin, it is reduced to the half every $210000$ validated blocks, starting at 50 BTC. This reduction, which happens around every 4 years, will continue until the reward becomes $10^{-8}$ BTC, and it is expected to happen in year 2140 (see \cite{survey2016bitcoin}).

The reader can observe that, in order to safely consider assumptions \eqref{A1:Computational}-\eqref{A3:FixedReward}, it is necessary that the time horizon at which the assumptions are no longer valid must be large with respect to the scale of time of the game (minutes). However, this requirement is not sufficient, since these assumptions are in competition with the convergence rate of the asymptotic payoff functions. Indeed, if one aims to consider limit objective functions, which represent long-term goals, it is also necessary that those limits are perceived within the scale of time  where the assumptions are valid. This fast convergence requirement is assured, through the lens of a Markovian approach, by a fast mixing of the dynamics of the game, and it is studied in Subsection \ref{subsec:MixingTime}.

Assumption \eqref{A4:Depth} is used to truncate the feasible states of the dynamics to a finite description. As we already mentioned, this assumption is artificial in some sense, since it is not part of the blockchain protocols. However, it simplifies the study of the game, and even when it has been used before, there is no validation, to the best of our knowledge, as a coherent approximation of the real dynamic without this truncation. Based on the theory of Markov processes, we provide a criteria to safely use the truncated model: The expected time needed to visit the states in the boundary of the truncation must be exponentially large. We study this criterion in Subsection \ref{subsec:Hitting}.
Assumption \eqref{A5:Simulataneous}
is a simplifying condition, which we expect to overcome in subsequent work.


\section{A Markov Model for a Mining Game with Two Players}
\label{sec:MarkovModel}

In what follows, we focus our attention in a mining game under immediate release with two players. When only one player (or colluded pool of players) is mining strategically, the honest players can be reduced to only one that concentrates the computational power. This reduction has been done previously in \cite{eyal2014majority,Koutsoupias2016} to study if Frontier strategy is an equilibrium. In this same line, while immediate release is a simplified model, since strategic players hide information, as stated in \cite{Koutsoupias2016} and studied in \cite{eyal2014majority,marmolejo2019competing}, it is a starting point to develop the theory of convergence rates.

Following the description of the mining game of Section \ref{sec:Pre} under the immediate release assumption, every state of the 2-players game is given by the tuple $(T,B_1,B_2)$, that is, the public tree and the mining blocks of each player.  By Definition \ref{def:Validation}, the only relevant information of $T$ is given by the subtree rooted at the last validated block. Assuming rationallity of both players, this subtree has only two branches: The path mined by player $1$ of depth $\ell_1$, and the path mined by player $2$ of depth $\ell_2$.  Thus, $(T,B_1,B_2)$ can be compactly represented by a pair $(\ell_1,\ell_2)$ (see Fig. \ref{fig:Capitulation}).
If one of the branches has a length strictly larger than $d$, it would mean that the first block of this branch has been validated. 
According to the truncated model, the other player must recognize this block as valid, and so the root of the tree must be a descendant of this block. 
Therefore, all possible states of the stochastic process are given by the integer pairs $(\ell_1,\ell_2)$ such that $0\leq\ell_1,\ell_2\leq d$.
At each turn, both players make a decision in \ref{Step3}, concerning which blocks to mine in order to find their new blocks. 
These decisions depend on two factors: the state $(\ell_1,\ell_2)$ of the game at the beginning of the turn, and the result of the mining race. 
Assuming rationality of both players, there are possibilities: To continue the branch or to capitulate it. 

On the one hand, to continue the branch means to mine the deepest block. 
On the other hand, to capitulate the branch means to select a block on the branch of the other player to restart the mining process. 
See Figure \ref{fig:Capitulation} that represents a capitulation of player $1$.
When a player capitulates, the state of the game is reset to $(0,s_1)$ if player $1$ capitulates, or to state $(s_2,0)$ if player $2$ capitulates. 
It is natural that for a given state $(\ell_1,\ell_2)$ and as a result of the mining race, at most one of the players capitulates (the one who loses the mining race).
As Figure \ref{fig:Capitulation} shows, the value $s_i$ with $i\in \{1,2\}$, is the amount of blocks that player $i$ will try to surpass after capitulation. We define a \emph{round} of the game as a set of transitions starting from one of the initial states $(0,s_1)$ and $(s_2,0)$ and a capitulation of one of the players. Players collect the rewards at the end of the rounds, and only one player wins the round (has positive revenue), which is the one that does not capitulate.

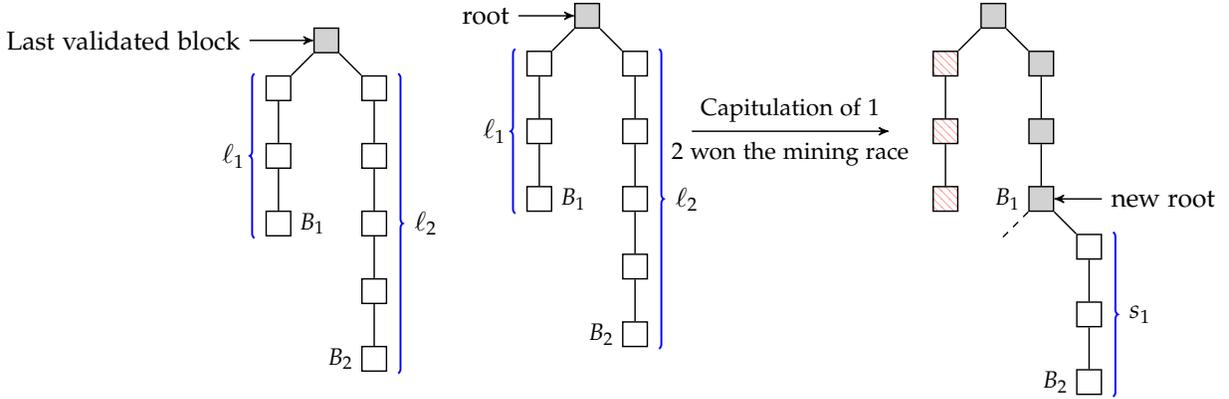
\begin{figure}[h]
	\begin{minipage}{.35\textwidth}
	\centering
	\scalebox{.9}{
	\begin{tikzpicture}[->,>=stealth',shorten >=0pt,auto,node distance=1cm,
		semithick,square/.style={regular polygon,regular polygon sides=4}]
		
		\tikzstyle{every state}=[square, text=black,font = \bfseries, fill opacity = 0.35, text opacity = 1,minimum size=0.5cm]
		\node[state] (Root) [fill = gray] {};
		\node[state] (B11) [below left of = Root] {};
		\node[state] (B12) [below of = B11] {};
		\node[state] (B13) [below of = B12] {};
		\node[] (MB1) [right of = B13,node distance=0.5cm] {\small $B_1$};
		\node[] (textRoot) [left of = Root, node distance=3cm] {Last validated block};
		\path (textRoot) edge (Root);
		
		\node[state] (B21) [below right of = Root] {};
		\node[state] (B22) [below of = B21] {};
		\node[state] (B23) [below of = B22] {};
		\node[state] (B24) [below of = B23] {};
		\node[state] (B25) [below of = B24] {};
		\node[] (MB2) [left of = B25,node distance=0.5cm] {\small $B_2$};
		
		\path[-]  (Root) edge (B11)
		(B11) edge (B12)
		(B12) edge (B13)
		;
		\path[-]  (Root) edge (B21)
		(B21) edge (B22)
		(B22) edge (B23)
		(B23) edge (B24)
		(B24) edge (B25)
		;	
		\node[] (L-B11) [above left of =B11, node distance = 0.5cm] {};	
		\node[] (L-B13) [below left of =B13,node distance = 0.5cm] {};	
		\node[] (R-B21) [above right of =B21, node distance = 0.5cm] {};	
		\node[] (R-B25) [below right of =B25,node distance = 0.5cm] {};		 	  
		\draw [-,thick, blue,decorate,decoration={brace,amplitude=2pt,mirror},xshift=-0pt,yshift=0](L-B11) -- (L-B13) node[left,black,midway,xshift=0] {$\ell_1$};
		\draw [-,thick, blue,decorate,decoration={brace,amplitude=2pt,mirror},xshift=0pt,yshift=0](R-B25) -- (R-B21) node[right,black,midway,xshift=0.1cm] {$\ell_2$};
	\end{tikzpicture}}
	\end{minipage}
	\begin{minipage}{.65\textwidth}
		\centering
		\scalebox{.9}{
	\begin{tikzpicture}[->,>=stealth',shorten >=0pt,auto,node distance=1cm,
		semithick,square/.style={regular polygon,regular polygon sides=4}]
		
		\tikzstyle{every state}=[square, text=black,font = \bfseries, fill opacity = 0.35, text opacity = 1,minimum size=0.5cm]
		\node[state] (Root) [fill = gray] {};
		\node[state] (B11) [below left of = Root] {};
		\node[state] (B12) [below of = B11] {};
		\node[state] (B13) [below of = B12] {};
		
		\node[] (textRoot) [left of = Root, node distance=1.5cm] {root};
		\path (textRoot) edge (Root);
		
		\node[state] (B21) [below right of = Root] {};
		\node[state] (B22) [below of = B21] {};
		\node[state] (B23) [below of = B22] {};
		\node[state] (B24) [below of = B23] {};
		\node[state] (B25) [below of = B24] {};
		
		\path[-]  (Root) edge (B11)
		(B11) edge (B12)
		(B12) edge (B13)
		;
		\path[-]  (Root) edge (B21)
		(B21) edge (B22)
		(B22) edge (B23)
		(B23) edge (B24)
		(B24) edge (B25)
		;	
		\node[] (L-B11) [above left of =B11, node distance = 0.5cm] {};	
		\node[] (L-B13) [below left of =B13,node distance = 0.5cm] {};	
		\node[] (R-B21) [above right of =B21, node distance = 0.5cm] {};	
		\node[] (R-B25) [below right of =B25,node distance = 0.5cm] {};		 	  
		\draw [-,thick, blue,decorate,decoration={brace,amplitude=2pt,mirror},xshift=-0pt,yshift=0](L-B11) -- (L-B13) node[left,black,midway,xshift=0] {$\ell_1$};
		\draw [-,thick, blue,decorate,decoration={brace,amplitude=2pt,mirror},xshift=0pt,yshift=0](R-B25) -- (R-B21) node[right,black,midway,xshift=0.1cm] {$\ell_2$};			 	  		  
		
		\node[state] (2-Root) [right of = Root, node distance =6cm,fill = gray] {};
		\node[state] (N11) [below left of = 2-Root, pattern = north west lines,pattern color = red] {};
		\node[state] (N12) [below of = N11, pattern = north west lines,pattern color = red] {};
		\node[state] (N13) [below of = N12, pattern = north west lines,pattern color = red] {};
		
		\node[state] (N21) [below right of = 2-Root,fill = gray] {};
		\node[state] (N22) [below of = N21,fill = gray] {};
		\node[state] (N23) [below of = N22,fill = gray] {};
		\node[state] (N24) [below right of = N23] {};
		\node[state] (N25) [below of = N24] {};
		\node[state] (N26) [below of = N25] {};
		
		\node[] (Ninv) [below left of =N23] {};
		\draw[-,dashed](N23) -- (Ninv);
		
		\node[] (textNewRoot) [right of = N23, node distance=1.8cm] {new root};
		\path (textNewRoot) edge (N23);
		
		\path[-]  (2-Root) edge (N11)
		(N11) edge (N12)
		(N12) edge (N13)
		;
		\path[-]  (2-Root) edge (N21)
		(N21) edge (N22)
		(N22) edge (N23)
		(N23) edge (N24)
		(N24) edge (N25)
		(N25) edge (N26)
		;	
		
		\node[] (R-N24) [above right of =N24, node distance = 0.5cm] {};	
		\node[] (R-N26) [below right of =N26,node distance = 0.5cm] {};		 	  
		\draw [-,thick, blue,decorate,decoration={brace,amplitude=2pt,mirror},xshift=0pt,yshift=0](R-N26) -- (R-N24) node[right,black,midway,xshift=0.1cm] {$s_1$};	
		
		\node[] (refFirstTree)  [right of =B22,node distance=0.7cm] {};
		\node[] (ref2ndTree)  [left of =N12,node distance=0.7cm] {};
		\path (refFirstTree) edge node{\small Capitulation of $1$}node[below,yshift=-0.3pt]{\small $2$ won the mining race}(ref2ndTree); 
		
		\node[] (MB1) [right of = B13,node distance=0.5cm] {\small $B_1$};
		\node[] (MB2) [left of = B25,node distance=0.5cm] {\small $B_2$};
		\node[] (NB1) [left of = N23,node distance=0.5cm] {\small $B_1$};
		\node[] (NB2) [left of = N26,node distance=0.5cm] {\small $B_2$};
	\end{tikzpicture}}	
	\end{minipage}	
	\caption{On the left, representation of the public tree $T$. The mining block of player $1$, $B_1$, must be the last block in the left-side branch; the mining block of player $2$, $B_2$, must be the last block in the right-side branch.
	On the right, capitulation of player $1$. Player $2$ won the mining race getting a branch of length $\ell_2+1$. Player $1$ validates $\ell_2+1-s_1$ blocks of player $2$; the new state is $(0,s_1)$.
	}\label{fig:Capitulation}
\end{figure}

This structure implies that each player takes part of a Markov Decision Process: At each turn, the game will be at a state $(\ell_1,\ell_2)$, and each player must decide if he will capitulate or not if he loses the mining race. 
When both decisions are taken, only two possible new states are reachable for the next turn: the one given by player $1$ winning the mining race (with probability $p_1$), and the one given by player $2$ winning the mining race (with probability $p_2$). 
By the Markovian property, for a state $(\ell_1,\ell_2)$, each player should make the same decision each time the game passes through that configuration.
Moreover, we take the values of $s_1$ and $s_2$ to be constant, both independent of the previous state (i.e. each time that a player capitulates, the player capitulates to the same state).
Thus, the strategies for a player can be summarized as what we call \emph{capitulation policies}.

\begin{definition}\label{def:CapPolicy} A capitulation policy for a player $i\in \{1,2\}$ is a couple $(C,s)$ such that the following holds:
	\begin{enumerate}[label=(\roman*)]
		\item $C: \{0,\ldots,d\}\times\{0,\ldots,d\}\to \{0,1\}$ and $s\in \{0,\ldots,d\}$,
		\item If $C(\ell_1,\ell_2) = 0$, it means that if a turn starts at state $(\ell_1,\ell_2)$, player $i$ continues to mine his or her branch, regardless if he or she wins the mining race or not that turn.
		\item If $C(\ell_1,\ell_2) = 1$, it means that if a turn starts at state $(\ell_1,\ell_2)$, player $i$ continues to mine his or her branch only if he or she wins the mining race at that turn, and he or she capitulates with $s_i=s$ otherwise.
	\end{enumerate}
\end{definition}

\subsection{Description of the Markov Chain}\label{subsec:Description}

In what follows consider $(C_1,s_1)$ and $(C_2,s_2)$ two feasible fixed capitulation policies, for player $1$ and $2$ respectively. 
Then, the Markov decision process previously described induces a Markov chain $(X_n)_{n\in\N}$, formally described as follows.
The states of the chain are given by $\mathcal{M}$, which is the subset of all states $(\ell_1,\ell_2)$ reachable from $(0,s_1)$ and $(s_2,0)$. For any initial distribution $\mu$ over $\mathcal{M}$, we consider the probability space $(\Omega,\mathcal{F},\P_{\mu})$ where $\Omega = \mathcal{M}\times \Omega_0 = \mathcal{M}\times (\{1,2\}\times\R_+)^{\N}$, the measurable sets are $\mathcal{F} = \mathcal{P}(\mathcal{M})\times \mathcal{F}_0$ and for every $(m,\omega)\in \Omega$, $m$ stands for the initial state and $\omega=(\omega_n,t_n)_{n\in\N}$ is the outcome of a sequence of mining races with their time lengths, and $\P_{\mu} = \mu\times \P_0$, that is
$\P_{\mu}(X_0 = m) = \mu(m)$ (see Section \ref{sec:Pre} for the definition of $(\Omega_0,\mathcal{F}_0,\P_0)$).
For a state $(\ell_1,\ell_2)\in \mathcal{M}$ there are two outgoing transitions, which are given by the following four cases:
\begin{enumerate}[label=(\alph*)]
	\item When $C_1(\ell_1,\ell_2)= 0$ and $C_2(\ell_1,\ell_2)= 0$, then $\P_{\mu}(X_{n+1}=(\ell_1+1,\ell_2)| X_n= (\ell_1,\ell_2)) = p_1$ and
	$\P_{\mu}(X_{n+1}=(\ell_1,\ell_2+1)| X_n = (\ell_1,\ell_2)) = p_2$.	
	\item When $C_1(\ell_1,\ell_2)= 1$ and $C_2(\ell_1,\ell_2)= 0$, then $\P_{\mu}(X_{n+1}=(\ell_1+1,\ell_2)| X_n= (\ell_1,\ell_2)) = p_1$ and $\P_{\mu}(X_{n+1}=(0,s_1)| X_n = (\ell_1,\ell_2)) = p_2$.        
	\item When $C_1(\ell_1,\ell_2)= 0$ and $C_2(\ell_1,\ell_2)= 1$, then $\P_{\mu}(X_{n+1}=(s_2,0)| X_n= (\ell_1,\ell_2)) = p_1$ and $\P_{\mu}(X_{n+1}=(\ell_1,\ell_2+1)| X_n = (\ell_1,\ell_2)) = p_2.$
	\item When $C_1(\ell_1,\ell_2)= 1$ and $C_2(\ell_1,\ell_2)= 1$, then $\P_{\mu}(X_{n+1}=(s_2,0)| X_n= (\ell_1,\ell_2)) = p_1$ and
	$\P_{\mu}(X_{n+1}=(0,s_1)| X_n = (\ell_1,\ell_2)) = p_2$.
\end{enumerate}
For each player $i\in \{1,2\}$, the expectations in the objective functions $R_i$ and $G_i$ are taken with respect to the probability space $(\Omega,\mathcal{F},\P_{\mu})$, that is,
\[
R_i = \lim_{n\to\infty}\E_{\mu}\left( \frac{\sum_{k=0}^{n-1} r_{i,k} - c_i\tau_k}{n} \right)\quad\mbox{ and }\quad G_i = \lim_{n\to\infty}\E_{\mu}\left( \frac{\sum_{k=0}^{n-1} r_{i,k}}{1+\sum_{k=0}^n\dist_k} \right),
\]
to emphasize the initial distribution whenever it is necessary. At turn $n$, the values of $r_{i,n}$ are computed depending on the transition $e =X_nX_{n+1}$ from the state $X_n$ to $X_{n+1}$. For each transition (edge) $e = (\ell_1,\ell_2)\to (\ell_1',\ell_2')$, the rewards of each player are the following:
If $(\ell_1',\ell_2') = (0,s_1)$, then $r_1(e) = 0$ and $r_2(e) = \ell_2+1 - s_1$.
If $(\ell_1',\ell_2') = (s_2,0)$, then $r_1(e) =\ell_1+1-s_1 $ and $r_2(e) = 0$.
In any other case, $r_1(e) = r_2(e) = 0$. Then,
$r_{i,n} = r_i(X_nX_{n+1})\text{ for each player } i\in\{1,2\}\text{ and every } n\in \N$.
We denote by $P$ the transition matrix of this chain, omitting the dependence on $(C_1,s_1,C_2,s_2)$ when there is no ambiguity. 
As usual, if the initial distribution $\mu$ is the delta distribution $\delta_m$ for some state $m\in \mathcal{M}$, we will simply write $\P_m$ and $\E_m$ instead of $\P_{\delta_m}$ and $\E_{\delta_m}$ in this case. 
\begin{lemma}\label{lemma:irreducible} 
	The chain $(\mathcal{M},P)$ is irreducible and there is a unique stationary distribution $\pi_P$.
\end{lemma}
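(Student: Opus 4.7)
Since $\mathcal{M}\subseteq\{0,\ldots,d\}^2$ is finite, by standard theory of finite-state Markov chains, existence and uniqueness of a stationary distribution $\pi_P$ will follow at once from irreducibility. The plan is therefore to prove irreducibility directly, by establishing that from any state in $\mathcal{M}$ both ``reset'' states $(0,s_1)$ and $(s_2,0)$ are reachable in finitely many steps with positive probability. Combined with the very definition of $\mathcal{M}$ as the set of states reachable from $(0,s_1)$ and $(s_2,0)$, this will immediately imply that every pair of states in $\mathcal{M}$ communicates.

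The main step is the following reachability claim: from an arbitrary state $(\ell_1,\ell_2)\in\mathcal{M}$, the state $(s_2,0)$ is reachable with positive probability. The key observation is that the transition rules (a)--(d) allow only four one-step moves---increment $\ell_1$, increment $\ell_2$, or reset to $(0,s_1)$ or $(s_2,0)$---and that on the event that player $1$ wins the next several mining races, the chain can only either increment $\ell_1$ (in cases (a), (b)) or jump to $(s_2,0)$ (in cases (c), (d), where $C_2=1$). Since $\ell_1$ cannot exceed $d$ within $\mathcal{M}$, the specific event that player $1$ wins the next $d-\ell_1+1$ mining races---which has probability at least $p_1^{d-\ell_1+1}>0$---cannot consist solely of increments, so a reset to $(s_2,0)$ must be triggered before the path would leave $\mathcal{M}$. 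Swapping the roles of the players and considering consecutive wins of player $2$ yields the symmetric statement that $(0,s_1)$ is reachable from every state with positive probability.

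Combining the two reachability statements with the definition of $\mathcal{M}$ produces irreducibility: any state $m\in\mathcal{M}$ is reachable from one of $(0,s_1)$ or $(s_2,0)$ by definition, and either of these reset states is reachable from every other state by the claim above. Uniqueness of $\pi_P$ is then a direct consequence of the Perron--Frobenius theorem applied to the (irreducible) transition matrix on the finite set $\mathcal{M}$. The main subtlety I anticipate is confirming that, along the deterministic path of player-$1$ wins, the intermediate states $(\ell_1+k,\ell_2)$ actually belong to $\mathcal{M}$; this is ensured by the implicit well-posedness of the chain, namely that the policies $(C_1,s_1,C_2,s_2)$ are chosen so that no transition escapes $\mathcal{M}$, which in particular forces $C_2=1$ at the relevant boundary states. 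Under this standing well-posedness, the argument closes.
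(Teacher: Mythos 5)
Your proposal is correct and follows essentially the same route as the paper: the paper also argues that consecutive wins of player $1$ (resp.\ player $2$) yield a positive-probability path from any state to $(s_2,0)$ (resp.\ $(0,s_1)$), combines this with the definition of $\mathcal{M}$ as the set of states reachable from these two reset states, and invokes finiteness for existence and uniqueness of $\pi_P$. Your additional justification that the run of player-$1$ wins must trigger a reset before $\ell_1$ exceeds $d$ is a worthwhile elaboration of a step the paper leaves implicit, but it is the same argument.
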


\begin{proof}
By construction, each state of $\mathcal{M}$ is either reachable from $(0,s_1)$ or $(s_2,0)$. Furthermore, for each state $(\ell_1,\ell_2)\in \mathcal{M}$, there is at least one path to arrive to $(s_2,0)$, given by consecutive winnings of $P_1$ of the mining races. Similarly, consecutive winnings of $P_2$ form a path from $(\ell_1,\ell_2)$ to $(0,s_1)$. Thus, the chain is irreducible.
Since the chain is finite, the existence and uniqueness of the stationary distribution follows (see e.g. \cite[Corollary 1.17]{LevinPeres2017}). 
\end{proof}

For the chain $(\mathcal{M},P)$ let us define the sets
\begin{equation}\label{eq:Boundaries}
	\begin{aligned}
		\partial_1\mathcal{M} &= \left\{ m\in\mathcal{M}: P(m, (s_2,0))>0 \right\},\\
		\partial_2\mathcal{M} &= \left\{ m\in\mathcal{M}: P(m, (0,s_1))>0 \right\}.
	\end{aligned}
\end{equation}
The set $\partial_i\mathcal{M}$ corresponds to the states $m\in \mathcal{M}$ for which $C_{-i}(m) = 1$, that is, the set of states for which player $i$ wins the current round after winning the current mining race.  Let us define the $|\mathcal{M}|$-dimensional vectors $\hat{r}_1$ and $\hat{r}_2$ as follows:
$\hat{r}_1(\ell_1,\ell_2) = p_1(\ell_1+1-s_2)$ if $(\ell_1,\ell_2)\in\partial_1\mathcal{M}$, and $\hat{r}_1(\ell_1,\ell_2) = 0$ otherwise;
$\hat{r}_2(\ell_1,\ell_2) = p_2(\ell_2+1-s_1)$ if $(\ell_1,\ell_2)\in \partial_2\mathcal{M}$, and $\hat{r}_2(\ell_1,\ell_2) = 0$ otherwise.
In principle, the values of the objective functions might depend on the initial distribution $\mu$. 
However, the ergodic theorem (see e.g. \cite[Theorem 1.10.2]{Norris1998}) suggest that regardless the initial distribution, the values of $R_i$ and $G_i$ should depend only on the invariant distribution $\pi_P$, for each $i\in \{1,2\}$. 
The following proposition formalizes this notion.  
Given two vectors $x,y$ with entries in $\mathcal{M}$, we denote by $\langle x,y\rangle=\sum_{a\in \mathcal{M}}x(a)y(a)$ the inner product between $x$ and $y$.

\begin{proposition}\label{prop:ConvergenceAverage} For the chain $(\mathcal{M},P)$ and any initial distribution $\mu$, we have that 
	$G_i = \langle \pi_P, \hat{r}_i\rangle/\dist$,
	for each $i\in \{1,2\}$, where 
	\[\dist = \lim_{n\to\infty} \frac{1}{n}\E_{\mu}\left(1+\sum_{k=1}^n \dist_k\right).\] 
	Furthermore,
	$\dist = \langle \pi_P, \hat{r}_1 + \hat{r}_2\rangle$ and $R_i = \dist(G_i - c_i\overline{\tau}) = \langle \pi_P,\hat{r}_i\rangle - c_i\overline{\tau} \dist$ for each $i\in \{1,2\}$.
\end{proposition}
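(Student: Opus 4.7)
The plan is to reduce every limit in the statement to a time-average of a function of the chain, and then invoke the ergodic theorem for irreducible finite Markov chains. First, since $r_{i,n}$ is fully determined by the transition $X_nX_{n+1}$, a case inspection of the four transition types from Section \ref{subsec:Description} shows that $\E_{\mu}(r_{i,n}\mid X_n=m)=\hat{r}_i(m)$ for every $m\in\mathcal{M}$. Moreover, validation occurs exactly on capitulation edges, so on every sample path $\dist_n=r_{1,n}+r_{2,n}$, and hence $\E_{\mu}(\dist_n\mid X_n=m)=\hat{r}_1(m)+\hat{r}_2(m)$.

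By Lemma \ref{lemma:irreducible}, $(\mathcal{M},P)$ is irreducible on a finite state space; the ergodic theorem then gives, for every $f\colon\mathcal{M}\to\R$,
\[
\frac{1}{n}\sum_{k=0}^{n-1}f(X_k)\longrightarrow \langle \pi_P,f\rangle,
\]
both $\P_{\mu}$-almost surely and, since $f$ is bounded, in $L^1$, independently of the initial distribution $\mu$. Taking $f=\hat{r}_1+\hat{r}_2$ together with the tower property yields $\dist=\langle \pi_P,\hat{r}_1+\hat{r}_2\rangle$, while $f=\hat{r}_i$ yields $\tfrac{1}{n}\E_{\mu}(\sum_{k=0}^{n-1}r_{i,k})\to\langle\pi_P,\hat{r}_i\rangle$. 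Irreducibility forces $\dist>0$: from any state, a capitulation must eventually occur since otherwise the chain would escape the finite state space $\mathcal{M}$, so some state with $\hat{r}_1+\hat{r}_2>0$ carries positive stationary mass.

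For $G_i$, I would exploit the pathwise bound $0\le \sum_{k=0}^{n-1}r_{i,k}\le \sum_{k=0}^{n-1}\dist_k\le 1+\sum_{k=0}^{n-1}\dist_k$, which shows that the integrand defining $G_i$ lies in $[0,1]$. Dividing numerator and denominator by $n$ and applying the ergodic theorem to each separately, the ratio converges $\P_{\mu}$-almost surely to $\langle \pi_P,\hat{r}_i\rangle/\dist$; bounded convergence then delivers $G_i=\langle \pi_P,\hat{r}_i\rangle/\dist$.

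For $R_i$, the $L^1$ law of large numbers for the i.i.d.\ sequence $(\tau_k)$ gives $\tfrac{1}{n}\sum\tau_k\to \E_0(\tau_0)$ almost surely and in $L^1$, while the ergodic-theorem computations above give $\tfrac{1}{n}(1+\sum\dist_k)\to \dist$. Combining these with the adjustment hypothesis \eqref{eq:adjustment}---and a short uniform-integrability argument made possible by the eventual almost sure lower bound $\tfrac{1}{n}(1+\sum\dist_k)\ge \dist/2$---forces the identification $\E_0(\tau_0)=\overline{\tau}\,\dist$. Plugging this into $\tfrac{1}{n}\E_{\mu}(\sum r_{i,k}-c_i\tau_k)\to\langle\pi_P,\hat{r}_i\rangle-c_i\E_0(\tau_0)$ then yields $R_i=\langle\pi_P,\hat{r}_i\rangle-c_i\overline{\tau}\,\dist=\dist(G_i-c_i\overline{\tau})$. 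The main delicate point in the argument is the passage from almost sure to expected convergence for the two ratios, but both times the uniform integrability needed is straightforward thanks to the essential boundedness provided by $\dist>0$.
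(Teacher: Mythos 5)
Your conclusions are all correct, and the overall scaffolding (ergodic theorem, bounded convergence, and the adjustment equation \eqref{eq:adjustment} used to identify $\E_0(\tau_0)=\overline{\tau}\,\dist$) matches the paper's. The genuine gap is in the step establishing $G_i$. The rewards $r_{i,k}=r_i(X_kX_{k+1})$ and the counts $\dist_k$ are functions of the \emph{transitions}, not of the states, so the ergodic theorem applied to $f(X_k)$ with $f=\hat{r}_i$ gives the almost-sure limit of $\frac1n\sum_k\hat{r}_i(X_k)$ but not of $\frac1n\sum_k r_{i,k}$. Your tower-property identity $\E_\mu(r_{i,k}\mid X_k=m)=\hat{r}_i(m)$ only equates \emph{expectations}; it does not transfer pathwise convergence from one sum to the other. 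Since $G_i$ is the limit of the expectation of a ratio (not a ratio of expectations), you truly need pathwise (or in-probability) convergence of numerator and denominator separately before bounded convergence applies, and that is exactly the step your argument does not supply. Two standard repairs: (i) the paper's route, which observes that $Z_k=(X_kX_{k+1})$ is itself an irreducible finite Markov chain on the edge set with stationary law $\pi_Q(m_1m_2)=\pi_P(m_1)p_j$ (where $j$ is the winner of the race), applies the ergodic theorem there, and then checks $\langle\pi_Q,r_i\rangle=\langle\pi_P,\hat{r}_i\rangle$; or (ii) note that $\frac1n\sum_k\bigl(r_{i,k}-\hat{r}_i(X_k)\bigr)$ is an average of bounded martingale differences and hence tends to $0$ almost surely. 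Either patch closes the gap in a couple of lines.

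Two smaller points. Your justification that $\dist>0$ (needed both to divide and for the uniform integrability in the $R_i$ step) is softer than necessary: the paper uses the deterministic bound $\sum_{k=0}^{n-1}\dist_k\ge n/(2d)-1$, valid on every sample path because at least one block is validated every $2d$ turns; this gives $h_n\ge 1/(2d)$ outright and makes all your domination arguments immediate, whereas ``some capitulation state carries positive stationary mass'' still requires ruling out the possibility that every reachable capitulation edge carries zero reward. By contrast, your derivation of $R_i$ via the law of large numbers for $(\tau_k)$ and the identification $\E_0(\tau_0)=\overline{\tau}\,\dist$ is sound and is essentially how the paper closes that computation.
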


\begin{proof}
Let $E$ be the set of all edges of the chain $(\mathcal{M},P)$, that is, $e =(m_1m_2)\in E$ if $P(m_1,m_2)>0$. Consider $i\in \{1,2\}$ and let $(Z_n)_{n\in\N}$ be the stochastic process over $(\Omega,\mathcal{F},\P_{\mu})$ given by $Z_n(m,\omega) = (X_{n}(m,\omega)X_{n+1}(m,\omega))\in E$.
Then, the process $(Z_n)_{n\in\N}$ is a Markov chain $(E,Q)$ with initial distribution $\tilde{\mu}$, where
\[
\tilde{\mu}(m_1,m_2) = \begin{cases}
	\mu(m_1)\cdot p_1\qquad&\mbox{ if }m_2\mbox{ follows after player }1\mbox{ won the mining race,}\\
	\mu(m_1)\cdot p_2\qquad&\mbox{ if }m_2\mbox{ follows after player } 2\mbox{ won the mining race.}
\end{cases}
\]
The transition matrix $Q$ over $E$ is given as follows: For $e=(m_1m_2)$ and $e'=(m_1'm_2')$, if $m_2 = m_1'$, then $Q(e,e') = P(m_1',m_2')$; otherwise, $Q(e,e')=0$.
We have that $(E,Q)$ is also irreducible and finite and therefore it has a unique stationary distribution $\pi_Q$ \cite[Corollary 1.17]{LevinPeres2017}.
Furthermore, for each turn $n$, we have that $r_{i,n} = r_i(Z_n)$ and $\dist_n=r_1(Z_n) + r_2(Z_n)$. Since every finite irreducible Markov chain is also positive recurrent (see e.g. \cite[Theorem 1.7.7]{Norris1998}) we can apply the ergodic theorem \cite[Theorem 1.10.2]{Norris1998} obtaining that
\footnote{A sequence of random variables $(Y_n)_{n\in \N}$ converges in probability to a random variable $Y$, denoted by $Y_n\xrightarrow{\P} Y$, if for every $\varepsilon>0$ we have that $\lim_{n\to \infty}\P(|Y_n-Y|>\varepsilon)=0$.}
\begin{align*}
	f_{1,n}=\frac{1}{n}\sum_{k=0}^{n-1} r_1(Z_k) &\xrightarrow{\P_{\mu}} \langle \pi_Q,r_1\rangle,\\
	f_{2,n}=\frac{1}{n}\sum_{k=0}^{n-1} r_2(Z_k) &\xrightarrow{\P_{\mu}} \langle \pi_Q,r_2\rangle,\\
	h_{n}=\frac{1}{n}\left(1+\sum_{k=0}^{n-1}\dist_k\right) &\xrightarrow{\P_{\mu}} \langle \pi_Q,r_1+r_2\rangle.
\end{align*}

On the one hand, the total amount of mined blocks at turn $n$ (validated or not) is $n$. On the other hand, in the worst case scenario, there is at least one block that is validated every $2d$ turns. Thus, we get that
\[
\frac{n}{2d}-1\leq\left\lfloor \frac{n}{2d}\right\rfloor\leq \sum_{k=0}^{n-1}\dist_k\leq n.
\]
The upper bound yields that $\{f_{1,n}\}_{n\in \N}$, $\{f_{2,n}\}_{n\in \N}$ and $\{h_n\}_{n\in \N}$ are uniformly bounded by two and therefore we have the convergence in expectation:\footnote{If we have a sequence of random variables $(Y_n)_{n\in \N}$ that converges in probability to a random variable $Y$ and such that $|Y_n|\le C$ for some $C$ and every $n\in \N$, then $\mathbb{E}(Y_n)\to \mathbb{E}(Y)$.} $\E_{\mu}(f_{i,n})\to \langle \pi_Q,r_i\rangle$ for each $i\in \{1,2\}$ and $\E_{\mu}(h_n) \to \langle \pi_Q,r_1+r_2\rangle$ when $n\to \infty$. 
Furthermore, we have that the ratio of the sequences converges in probability,\footnote{If we have two sequences of positive random variables $\{Y_n\}_{n\in \N}$ and $\{Z_n\}_{n\in \N}$ that converge in probability to $Y$ and $Z$ respectively, with $\{Z_n\}_{n\in \N}$ bounded away from zero, then $\{Y_n/Z_n\}_{n\in \N}$ converges in probability to $Y/Z$.} that is
\[
\frac{f_{i,n}}{h_n}\xrightarrow{\P_{\mu}}\frac{\langle\pi_Q,r_i\rangle}{\langle \pi_Q,r_1+r_2\rangle}.
\]
Finally, recalling that $\dist_k = r_1(Z_k) + r_2(Z_k)$, we have that $f_{i,n}/h_n\leq 1$ for every positive integer $n$, and therefore we have the convergence in expectation, 
\[
\lim_{n\to \infty}\E_{\mu}\left( \frac{f_{i,n}}{h_{n}} \right)=\frac{\langle \pi_Q,r_i\rangle}{\langle \pi_Q,r_1+r_2\rangle},
\]
for each $i\in \{1,2\}$.
For every $e=(m_1m_2)\in E$ we have that
\[
\pi_Q(e) = \begin{cases}
	\pi_P(m_1)\cdot p_1\qquad&\mbox{ if }m_2\mbox{ follows after player 1 won the mining race,}\\
	\pi_P(m_1)\cdot p_2\qquad&\mbox{ if }m_2\mbox{ follows after player 2 won the mining race.}
\end{cases}
\]
Now, noting for $e=(m_1m_2)$ we have that that $r_1(e)= 0$, whenever $m_2\neq (s_2,0)$ we have that
\begin{align*}
	\langle \pi_Q, r_1\rangle &= \sum_{m\in \mathcal{M}} \pi_Q(m, (s_2,0))r_1(m,(s_2,0))\\
	&= \sum_{m\in \partial_1\mathcal{M}} \pi_P(m)p_1r_1(m,(s_2,0))=\sum_{m\in\mathcal{M}} \pi_P(m)\hat{r}_1(m) = \langle \pi_P,\hat{r}_1\rangle.
\end{align*}
Similarly, we get $\langle \pi_Q,r_2\rangle = \langle \pi_P,\hat{r}_2\rangle$ and $\langle \pi_Q,r_1 +r_2\rangle = \langle\pi_P,\hat{r}_1+\hat{r}_2\rangle$.
The proof is finished noting that $\dist = \lim_{n\to \infty}\E_{\mu}(h_n)$, that $G_i = \lim_{n\to \infty} \E_{\mu}(f_{i,n}/h_n)$ for each $i\in \{1,2\}$, and writing
\begin{align*}
	R_i &= \lim_{n\to \infty}\E_{\mu}\left(f_{i,n} - \frac{1}{n}\sum_{k=0}^{n-1}c_i\tau_k\right)\\
	&=\lim_{n\to \infty} \E_{\mu}\left(f_{i,n} - c_ih_n\frac{\sum_{k=0}^{n-1}\tau_k}{1+\sum_{k=0}^{n-1}\dist_k}\right)\\
	&=\lim_{n\to \infty} \E_{\mu}(f_{i,n}) - c_i\E(h_n)\cdot \E_{\mu}\left(\frac{\sum_{k=0}^{n-1}\tau_k}{1+\sum_{k=0}^{n-1}\dist_k}\right) = \langle\pi_P,\hat{r}_i\rangle - c_i\overline{\tau}\dist. \qedhere
\end{align*}
\end{proof}
The formula $R_i = \dist(G_i - c_i\overline{\tau})$  reflects the trade-off between $R_i$ and $G_i$: While strategic mining might increase the value of $G_i$ above $p_i$, it does that by reducing $\dist$ (which is always equal to 1 if both players play Frontier), and thus, since $G_i \cdot \dist = \langle \pi_P,\hat{r}_i\rangle \leq p_i$ (see the corollary below) the true effect of strategic mining is not in the revenues per turn, but in the averaged costs. 

\begin{corollary}\label{cor:EqWithoutCost} Regardless the capitulation policies $(C_1,s_1)$ and $(C_2,s_2)$, one always has that
	$\langle \pi_P,\hat{r}_i\rangle \leq p_i,\mbox{ for }i\in\{1,2\}$.
	Therefore, if the normalized marginal costs $c_i$ are zero and both players try to maximize their asymptotic expected revenue, $R_1$ and $R_2$, then Frontier is a Nash equilibrium.
\end{corollary}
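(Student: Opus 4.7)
My plan is to establish the bound $\langle \pi_P,\hat{r}_i\rangle \leq p_i$ by invoking Proposition~\ref{prop:ConvergenceAverage}, which identifies $\langle \pi_P,\hat{r}_i\rangle$ with the asymptotic average reward per turn $\lim_{n\to\infty}\tfrac{1}{n}\E_\mu\bigl(\sum_{k=0}^{n-1}r_{i,k}\bigr)$. By its definition, the quantity $\sum_{k=0}^{n-1}r_{i,k}$ counts the total number of blocks of player~$i$ that have been validated up to turn $n$. The key bookkeeping observation is that every validated block of player~$i$ originates in a distinct mining race that player~$i$ won, which yields the pointwise bound
\begin{equation*}
\sum_{k=0}^{n-1}r_{i,k}\ \leq\ \sum_{k=0}^{n-1}\mathbf{1}\{\omega_k=i\}.
\end{equation*}
Taking expectations under $\P_\mu$, dividing by $n$, and letting $n\to\infty$ produces the desired inequality $\langle \pi_P,\hat{r}_i\rangle\leq p_i$.

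For the Nash equilibrium assertion, my plan is to show that Frontier--Frontier saturates this bound, and then to use the first part to rule out any profitable unilateral deviation. Under Frontier by both players, the loser of each mining race immediately capitulates with $s_i=0$ to the new leaf of the opponent's branch, so no mined block is ever orphaned. Hence the previous pointwise inequality becomes an equality up to an $O(1)$ boundary term coming from blocks still in contention, giving $\langle \pi_P,\hat{r}_i\rangle = p_i$ for $i\in\{1,2\}$. With $c_i=0$, Proposition~\ref{prop:ConvergenceAverage} reduces $R_i$ to $\langle\pi_P,\hat{r}_i\rangle$, so under Frontier each player collects exactly $R_i=p_i$. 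If player~$i$ unilaterally deviates to any other capitulation policy while player~$-i$ keeps playing Frontier, the first part applied to the resulting chain bounds the deviator's payoff by $p_i$, matching the Frontier payoff. Therefore no profitable unilateral deviation exists.

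The one delicate step I foresee is making the bookkeeping inequality rigorous across all capitulation policies, including situations with many orphaned blocks: this amounts to exhibiting an injection from the validated blocks of player~$i$ at the end of turn $n$ to the set of mining races up to turn $n$ that player~$i$ won, which follows from the fact that every block in the official chain was produced in exactly one mining race whose winner is the block's owner. The remainder of the argument is a direct combination of the identities supplied by Proposition~\ref{prop:ConvergenceAverage} with the elementary observation that Frontier leaves no block orphaned.
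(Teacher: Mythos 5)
Your proposal is correct and follows essentially the same route as the paper: the core of both arguments is the pointwise bound $\sum_{k=0}^{n-1}r_{i,k}\leq\sum_{k=0}^{n-1}\mathbf{1}\{\omega_k=i\}$ (every validated block comes from a distinct race its owner won), combined with the observation that Frontier attains the value $p_i$. You are merely a bit more explicit than the paper about why Frontier--Frontier saturates the bound, which is a welcome but inessential elaboration.
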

\begin{proof}
Let us assume that player $2$ is playing Frontier. 
Regardless the capitulation policy $(C_1,s_1)$ of player $1$, in the best case player $1$ will get the reward of all the blocks that player $1$ has mined, and thus
$\sum_{k=0}^{n-1}r_{1,k} \leq \sum_{k=0}^{n-1}1_{\{\omega_k=1\}}$. This yields that
\begin{align*}
	R_1 = \lim_{n\to\infty}\frac{1}{n}\E\left(\sum_{k=0}^{n-1}r_{1,k}\right)\leq \lim_{n\to\infty}\frac{1}{n}\E\left(\sum_{k=0}^{n-1}1_{\{\omega_k=1\}}\right) = p_1. 
\end{align*}
Since $p_1$ is the value of $R_1$ under Frontier, the conclusion follows. 
\end{proof}
The above corollary shows the impact of the selection in the payoff functions. Our result does not, in principle, contradict the results obtained in \cite{eyal2014majority,Koutsoupias2016,marmolejo2019competing}, since they assume that players aim to maximize the ratio of owned validated blocks, that is, $G_i$. However, when we look revenues, Proposition \ref{prop:ConvergenceAverage} and Corollary \ref{cor:EqWithoutCost}  tell us that maximizing $G_i$ is not necessarily optimal: informally, in simple words, strategic players maximizing $G_i$ get a bigger portion of a smaller cake. Furthermore, the same proof of Corollary \ref{cor:EqWithoutCost}  is valid without the immediate release assumption, since it is based on the fact that under strategic mining, it is not possible for a player to win more blocks than those that the player has found, which is the payment when all players play Frontier. 


\section{Convergence Study for the Markov Chain\label{sec:Times}}

In what follows we suppose that the player $2$ plays the Frontier strategy, that is, 
$C_2 = 1_{\{(\ell_1,\ell_2):\ell_1\geq \ell_2\}}$ and $s_2 = 0$. 
We denote $s = s_1$ and $C = C_1$ the capitulation policy for player 1. With this notation we can describe the sets in \eqref{eq:Boundaries} as $\partial_1 \M = \{(\ell,\ell) \in \M\}$ and $\partial_2 \M = \{(\ell_1,\ell_2) \in \M: C(\ell_1,\ell_2) = 1\}$.
That is, $\partial_i \M$ is the set of states where the other player capitulates if player $i$ wins the current mining race. We will call these states and the corresponding outgoing transitions, \emph{capitulation states} and \emph{transitions} for player $i$.
Any other state or transition is called \emph{interior}.
Assuming rationality of players, it is natural to consider capitulation policies $(C,s)$ satisfying the following:
\begin{equation}\label{eq:RationalPolicies}
	\text{When }C(\ell_1,\ell_2) = 0, \text{ we have that } C(\ell_1,\ell_2') = 0\mbox{ for all }\ell_2'\leq \ell_2.
\end{equation}
The above implication tells us that if the strategic player is willing to continue the round at state $(\ell_1,\ell_2)$, then he or she should be willing to continue the round for any other state $(\ell_1,\ell_2')$ with $\ell_2'\leq \ell_2$, since those are more favorable states than $(\ell_1,\ell_2)$. Thus, it is natural to introduce the notion of \emph{gap tolerance}, which should be the maximum value of $\ell_2$ satisfying that $C(\ell_1,\ell_2) = 0$.

\begin{definition}\label{def:GapTol} For a capitulation policy $(C,s)$ for player $1$, we define the gap tolerance as the function
	$g:\{0,\ldots,d\}\to \{0,\ldots,d\}$
	such that $g(\ell_1)= \max\{ \ell_2-\ell_1\ :\ C(\ell_1,\ell_2) = 0 \}$.
	We define the maximum gap tolerance as $\overline{g} = \max_{\ell_1\in \{0,\ldots,d\}} g(\ell_1)$.
\end{definition}

Note that we can always characterize the maximal gap tolerance by
$\overline{g} = \max_{(\ell_1,\ell_2)\in\mathcal{M}}(\ell_2-\ell_1)$,
where the inequality holds by the construction of the reachable states.
In order to study the Markov chain, it results useful to consider a lattice representation of $(\mathcal{M},P)$.
In this representation, we represent the states of the Markov chain in the two dimensional integer lattice and we represent the transitions as arrows.
See Figure \ref{fig:lat-rep} for an example.

\begin{figure}[h!t]
	\centering
	\includegraphics[width=.6\textwidth,height=.3\textheight]{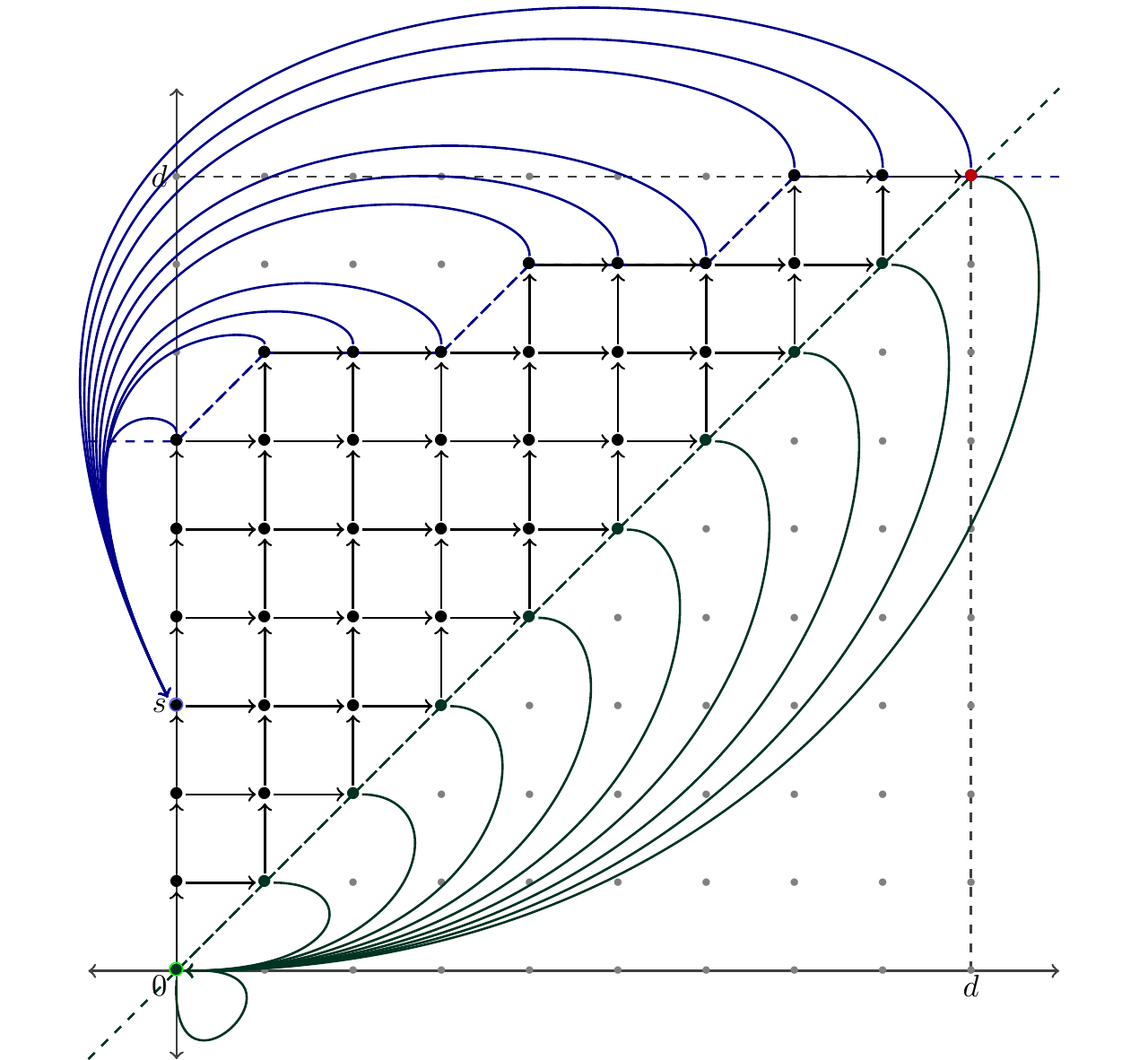}
	\caption{
		Lattice representation of our Markov chain.
		Blue dots and arrows correspond to capitulation states and transitions for player $1$ and green ones, for player $2$. Black arrows correspond to interior transitions.
	}
	\label{fig:lat-rep}
\end{figure}

If we have a state $(\ell_1,\ell_2)\in \partial_1\mathcal{M}$, we identify
$(\ell_1+1,\ell_2)$ with $(0,0)$. 
If we have $(\ell_1,\ell_2)\in \partial_2\mathcal{M}$, we identify
$(\ell_1,\ell_2+1)$ with $(0,s)$.
Observe that, regardless the capitulation policy of player $1$, one has that the induced Markov chain is aperiodic, since $P^n((0,0),(0,0))\geq p_1^n >0$, for every positive integer $n$.
Thus, the classic convergence theorem (see e.g. \cite[Theorem 4.9]{LevinPeres2017}) implies that
$\max_{m\in \mathcal{M}}\| P^n(m,\cdot) - \pi_P \|_{\TV}\to 0$,
where $\|\cdot\|_{\TV}$ is the total variation norm of measures. Together with Proposition \ref{prop:ConvergenceAverage}, this yields that, in the long run, the behavior of the mining game can be fully described by the stationary distribution $\pi_P$.  

\subsection{Mixing Time of the Markov Chain\label{subsec:MixingTime}}

For the policy $(C,s)$, we are interested in estimating how much time do we need in order to observe the objective values considered, $R_1$ the expected revenue, and $G_1$ the expected ratio of validated blocks. 
As a benchmark, we consider the mixing time of the induced Markov chain $(\mathcal{M},P)$.
Recall that the mixing time for a tolerance $\varepsilon>0$ is given by 
$t_{\mix}(\varepsilon) = \min\left\{ n\in \N \ :\ \max_{m\in\mathcal{M}}\|P^n(m,\cdot) -\pi_P\|_{\TV}\leq \varepsilon\right\}$.
In what follows we provide a bound for $t_{\mix}(\varepsilon)$ in terms of the maximum gap tolerance supported by player $1$.

\begin{theorem}  
	\label{thm:mixing}
	Let $(C,s)$ be the capitulation policy of $1$ and let $(\mathcal{M},P)$ the associated Markov chain when the player 2 plays Frontier. Then, for every $\varepsilon\in (0,1)$, we have that
	\[
	t_{\mix}(\varepsilon) \leq \left\lceil\frac{\ln(\varepsilon)}{\ln\Big(1 - p_1^{\overline{g}+1}\Big)}\right\rceil(\overline{g}+1).
	\]
\end{theorem}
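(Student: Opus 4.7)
The plan is to establish the bound via a coupling argument. I would first construct a Markovian coupling $(X_n, Y_n)_{n\in\N}$ of two copies of the chain $(\mathcal{M},P)$, starting from arbitrary states $m,m'\in\mathcal{M}$, by driving both chains with the \emph{same} sequence of mining race winners $(\omega_n)_{n\in\N}\in\{1,2\}^{\N}$.

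The heart of the proof is the following structural claim: starting from any state $(\ell_1,\ell_2)\in \mathcal{M}$, if player 1 wins $\overline{g}+1$ consecutive mining races, then the chain is deterministically at state $(0,0)$. To verify this, recall that every reachable state satisfies $\ell_2-\ell_1\leq \overline{g}$ by the maximality characterization established in the text. Since player 2 plays Frontier, any state with $\ell_1\geq \ell_2$ is a capitulation state for player 2, so a win of player 1 at such a state resets the chain to $(s_2,0)=(0,0)$. Starting from $(\ell_1,\ell_2)$ with $\ell_1<\ell_2$, consecutive wins of player 1 move the chain through $(\ell_1+1,\ell_2),\ldots,(\ell_2,\ell_2)\in \partial_1\mathcal{M}$ in $\ell_2-\ell_1$ steps, and the next win sends it to $(0,0)$. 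This occurs within $\ell_2-\ell_1+1\leq \overline{g}+1$ wins, and once at $(0,0)$ further wins of player 1 keep the chain there.

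Under the grand coupling, this claim implies that if player 1 wins $\overline{g}+1$ races in a row, the two coupled chains merge at $(0,0)$. Dividing time into disjoint blocks of length $\overline{g}+1$, the probability that player 1 sweeps a given block is exactly $p_1^{\overline{g}+1}$, independently across blocks. Letting $T$ denote the coupling time, it follows that for every $k\in\N$,
\[
\P(T>k(\overline{g}+1))\leq (1-p_1^{\overline{g}+1})^k.
\]
By the standard coupling inequality (see e.g.\ \cite[Theorem 5.4]{LevinPeres2017}),
\[
\max_{m\in\mathcal{M}}\|P^n(m,\cdot)-\pi_P\|_{\TV}\leq \max_{m,m'\in\mathcal{M}}\P_{m,m'}(T>n)\leq (1-p_1^{\overline{g}+1})^{\lfloor n/(\overline{g}+1)\rfloor}.
\]
Requiring the right-hand side to be at most $\varepsilon$ and solving for $n$ yields $n\geq \lceil\ln(\varepsilon)/\ln(1-p_1^{\overline{g}+1})\rceil(\overline{g}+1)$, which is exactly the claimed bound on $t_{\mix}(\varepsilon)$.

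The main obstacle is verifying the structural claim of the deterministic collapse after $\overline{g}+1$ consecutive wins; beyond it, the argument is a routine block-wise application of coupling. This collapse crucially depends on the identity $\overline{g}=\max_{(\ell_1,\ell_2)\in\mathcal{M}}(\ell_2-\ell_1)$ together with the specific capitulation rule of the Frontier strategy for player 2.
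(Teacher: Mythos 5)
Your proposal is correct and follows essentially the same route as the paper's proof: the same grand coupling driven by a shared sequence of mining-race winners, the same structural observation that $\overline{g}+1$ consecutive wins of player 1 deterministically collapse any reachable state to $(0,0)$ (using $0\leq\ell_2-\ell_1\leq\overline{g}$ on $\mathcal{M}$), and the same block decomposition into independent sweeps of probability $p_1^{\overline{g}+1}$ followed by the standard coupling inequality. No gaps.
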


\begin{proof}
Choose any two states $x,y\in \mathcal{M}$ and let $(Z_n)_{n\in\N}$ be a sequence of independent Bernoulli trials of parameter $p_1$. 
If $Z_n$ is a success, it represents that player $1$ has won the $n$-th mining race. 
Let $(A_n,B_n)_{n\in\N}$ be the stochastic process with values in $\mathcal{M}\times\mathcal{M}$ given by
\begin{enumerate}
	\item $A_0\sim \delta_x$,
	\item $B_0\sim \delta_y$,
	\item $A_{n+1} = A_n + (Z_n,1-Z_n)$ and $B_{n+1} = B_n + (Z_n,1-Z_n)$.
\end{enumerate}
Following the notation of \cite[Chapter 5]{LevinPeres2017}, we will consider $\P_{x,y}$ as the probability measure over a space where the random variables $A_n,B_n$ and $Z_n$ are defined, and satisfying that $A_0\sim \delta_x$, $B_0\sim \delta_y$ and $Z_n\sim B(p_1)$.
By construction, for every states $(\ell_1,\ell_2), (\ell_1',\ell_2'), (\ell_1'',\ell_2'')\in\mathcal{M}$ we have that
\begin{align*}
	\P_{x,y}(A_{n+1}=(\ell_1'',\ell_2'')| A_n = (\ell_1,\ell_2), B_n = (\ell_1',\ell_2')) &= P((\ell_1,\ell_2), (\ell_1'',\ell_2'')),\\
	\P_{x,y}(B_{n+1}=(\ell_1'',\ell_2'')| A_n = (\ell_1,\ell_2), B_n = (\ell_1',\ell_2')) &= P((\ell_1',\ell_2'), (\ell_1'',\ell_2'')).
\end{align*}
This yields that $(A_n,B_n)$ is a coupling for the Markov chain $(\mathcal{M},P)$, satisfying that $A_n = B_n$ implies $A_k = B_k$ for every $k\geq n$.
Intuitively, for each transition $n$ to $n+1$, either both processes $A_n$ and $B_n$ move up  or both processes move to the right in the lattice representation of Figure \ref{fig:lat-rep}.
Let $T_c$ be the \emph{coalescence time} of the coupling, that is, $T_c = \min\{n \in \N :\ A_n = B_n \}$.
Note that, for any turn $n$, and any state $(\ell_1,\ell_2)\in\mathcal{M}$, we have that
\begin{align*}
	\P_{x,y}(A_{n+g+1} = (0,0)| A_n = (\ell_1,\ell_2), Z_{n}=1,\ldots,Z_{n+g}=1) &= 1,\\
	\P_{x,y}(B_{n+g+1} = (0,0)| A_n = (\ell_1,\ell_2), Z_{n}=1,\ldots,Z_{n+g}=1) &= 1.
\end{align*}
The above equations follow from the fact that if $(\ell_1,\ell_2)\in \mathcal{M}$, then $0\leq \ell_2-\ell_1\leq \overline{g}$. 
Thus, after $\overline{g}+1$ consecutive winnings of player $1$, the state must be $(0,0)$. 
Indeed, on the one hand, if $\ell_2-\ell_1=\overline{g}$, it means that after $g$ consecutive wins of player $1$, the chain is at $(\ell_2,\ell_2)\in \partial_1\mathcal{M}$, and thus, after one more win of player $1$, the chain goes to $(0,0)$. 
On the other hand, if the chain goes back to $(0,0)$ before the $\overline{g}+1$ consecutive wins, each win of player $1$ maintains the chain at $(0,0)$.

For every positive integer $n$, consider the event $S_n$ where the sequence $(Z_0,\ldots,Z_{n-1})$ contains $\overline{g}+1$ consecutive successes.
Then, we have that $\P_{x,y}(S_n)\leq\P_{x,y}(T_c \leq n)$.
While $S_n$ is a well-known event in the literature, it is hard to explicitly estimate its probability. 
Thus, we will provide a simpler lower bound for it. 
Let $U_k$ be the event given by $Z_{k(\overline{g}+1)}=1,\ldots,Z_{(k+1)(\overline{g}+1)-1}=1$.
We have $\P_{x,y}(U_k) = p_1^{\overline{g}+1}$ for each $k\in \N$ and the sequence $(U_k)_{k\in\N}$ is independent. 
Furthermore, we have that
\begin{align*}
	\P_{x,y}(T_c\leq n(\overline{g}+1))&\geq \P_{x,y}\Big(S_{n(\overline{g}+1)}\Big)\geq \P_{x,y}\left(\bigcup_{j=0}^{n-1}U_j\right)= 1-\left(1-p_{1}^{\overline{g}+1}\right)^n.
\end{align*}
We deduce that $\P_{x,y}(T_c > n(\overline{g}+1))\leq (1-p_{1}^{\overline{g}+1})^n$. 
On the other hand, we have that $(1-p_{1}^{\overline{g}+1})^n\leq \varepsilon$ if and only if $n\geq \ln(\varepsilon)/\ln(1-p_1^{\overline{g}+1})$,
and then we conclude that $\P_{x,y}(T_c > \overline{n}(\overline{g}+1)) \leq \varepsilon$, for 
\[\overline{n} = \left\lceil\frac{\ln(\varepsilon)}{\ln(1-p_1^{\overline{g}+1})}\right\rceil\] 
and so by \cite[Corollary 5.5]{LevinPeres2017}, we deduce that $t_{\mix}(\varepsilon) \leq \overline{n}(\overline{g}+1)$,
finishing the proof. 
\end{proof}

As a direct corollary from Theorem~\ref{thm:mixing}, when the time horizon $T$ and the total variation tolerance $\varepsilon$ are given, we can provide a lower bound on $p_1$ in terms of these parameters and the maximum gap tolerance $\overline{g}$.

\begin{corollary}
	\label{coro:p-mix}
	For every $\epsilon > 0$ and $T > 0$, we have $t_{\mix}(\epsilon) \leq T$ when 
	$p_1 \geq \widetilde{p}_1 = \Big(1 - \epsilon^{\frac{\overline{g}+1}{T-(\overline{g}+1)}}\Big)^{\frac{1}{\overline{g} + 1}}$.
\end{corollary}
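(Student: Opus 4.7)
The plan is to invert the mixing-time bound from Theorem~\ref{thm:mixing} and solve for $p_1$ as a function of $\varepsilon$, $T$, and $\overline{g}$. Since the statement is a direct corollary, no new probabilistic machinery is needed; the work is purely analytic.

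First, starting from
\[
t_{\mix}(\varepsilon) \leq \left\lceil\frac{\ln(\varepsilon)}{\ln(1-p_1^{\overline{g}+1})}\right\rceil(\overline{g}+1),
\]
I would use the elementary bound $\lceil x \rceil \leq x+1$ to obtain the sufficient condition
\[
\left(\frac{\ln(\varepsilon)}{\ln(1-p_1^{\overline{g}+1})}+1\right)(\overline{g}+1)\leq T,
\]
which, assuming $T>\overline{g}+1$, is equivalent to
\[
\frac{\ln(\varepsilon)}{\ln(1-p_1^{\overline{g}+1})}\leq \frac{T-(\overline{g}+1)}{\overline{g}+1}.
\]

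Next I would carefully track signs when clearing denominators. Since $\varepsilon\in(0,1)$ and $1-p_1^{\overline{g}+1}\in(0,1)$, both $\ln(\varepsilon)$ and $\ln(1-p_1^{\overline{g}+1})$ are negative, so the left-hand side is positive. Multiplying by the negative quantity $\ln(1-p_1^{\overline{g}+1})$ reverses the inequality and yields
\[
\ln(1-p_1^{\overline{g}+1})\leq \frac{\overline{g}+1}{T-(\overline{g}+1)}\,\ln(\varepsilon).
\]
Exponentiating and rearranging gives
\[
p_1^{\overline{g}+1}\geq 1-\varepsilon^{\frac{\overline{g}+1}{T-(\overline{g}+1)}},
\]
which, upon taking the $(\overline{g}+1)$-th root, is precisely the condition $p_1\geq \widetilde{p}_1$ stated in the corollary.

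The only delicate point is the direction of inequalities when manipulating logarithms of quantities in $(0,1)$, so I would be explicit about the sign of each factor at every step. No step is genuinely hard; the proof is essentially a one-line consequence of Theorem~\ref{thm:mixing}, and I would present it in three or four short displayed equations with a brief remark ensuring $T>\overline{g}+1$ so that the expression defining $\widetilde{p}_1$ is well defined (otherwise the bound is vacuous, since $\widetilde{p}_1\geq 1$ forces the inequality to fail for a probability).
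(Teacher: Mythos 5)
Your derivation is correct and is exactly the inversion the paper intends (it states the corollary without proof as a ``direct corollary'' of Theorem~\ref{thm:mixing}): bounding the ceiling by $x+1$, isolating $\ln(\varepsilon)/\ln(1-p_1^{\overline{g}+1})$, and carefully flipping the inequality when multiplying by the negative logarithm reproduces precisely the stated $\widetilde{p}_1$, as one can also confirm against the numerical values in Table~\ref{tab:rapid-mixing}. Your closing remark that one should additionally assume $T>\overline{g}+1$ (so that the exponent and hence $\widetilde{p}_1$ are well defined) is a fair observation about an edge case the paper itself leaves implicit.
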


\begin{table}[h]
	\centering
	\begin{tabular}{l|*{10}{|r}}
		$\overline{g}$ & $1$ & $2$ & $3$ & $4$ & $5$ & $6$ & $7$ & $8$ & $9$ & $10$ \\
		\hline
		$\widetilde{p}_1$ &
		$0.037$ &
		$0.127$ &
		$0.229$ &
		$0.322$ &
		$0.401$ &
		$0.467$ &
		$0.522$ &
		$0.569$ &
		$0.609$ &
		$0.642$
	\end{tabular}
	\caption{Examples of rapid mixing strategies for $\epsilon=10^{-3}$ and $T = 10^4$. For any $(C,s)$ with the given maximum gap tolerance $\overline{g}$, the associated chain satisfies $t_{\mix}(\epsilon) < T$ whenever $p_1\geq \widetilde{p}_1$.}
	\label{tab:rapid-mixing}
\end{table}

\begin{figure}[h!t]
	\centering
	\includegraphics[scale=1.2]{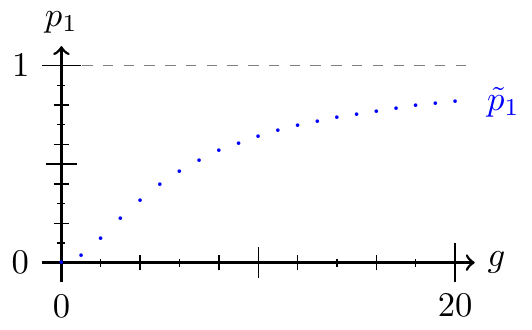}
	\hspace{4em}
	\includegraphics[scale=1.2]{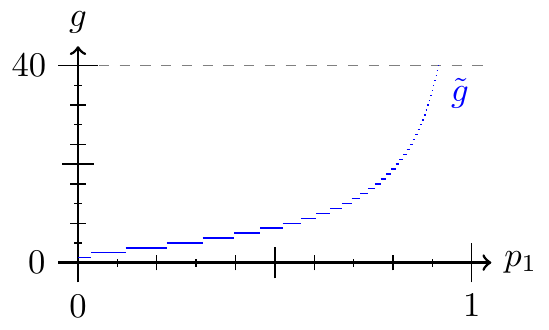}
	\caption{On the left, rapid mixing strategies for $\epsilon=10^{-3}$ and $T = 10^4$ with $\bar g\in \{1,\ldots,20\}$.
	For any $(C,s)$ with the given maximum gap tolerance $\bar g$, the associated Markov chain mixes rapidly (that is, $t_{\mix}(\epsilon) < T$) whenever $p_1\geq \widetilde{p}_1$.
	On the right, rapid mixing strategies for $\epsilon=10^{-3}$ and $T = 10^4$. For a given $p_1$, the associated chain mixes rapidly whenever $\overline{g} < \widetilde{g}$.}
	\label{fig:rapid-mixing}
\end{figure}
For instance, for $\varepsilon = 10^{-3}$ and $T \leq 10^4$, that is, about 70 days, if the average time of a mining race is 10 minutes, we get the values of $\widetilde{p}_1$ as in Table~\ref{tab:rapid-mixing} and Figure~\ref{fig:rapid-mixing}.
The inverse problem of determining $\widetilde{g}$, for a given $p_1$, such that any capitulation policy with maximum  gap tolerance $\overline{g} \leq \widetilde{g}$ exhibits fast mixing, can be easily read from the previous analysis, since our upper bound for $t_{\mix}(\epsilon)$ is increasing in $\overline{g}$ (see Figure~\ref{fig:rapid-mixing}).
In the following theorem, we show the existence of a capitulation policy for which the mixing time of the associated Markov chain is exponentially large as a function of $d$. 

\begin{theorem}
	\label{thm:large-mixing}
	There exists a capitulation policy for player 1 for which the associated Markov chain has a mixing time of at least $\frac{1}{4}(5/4)^d$.
\end{theorem}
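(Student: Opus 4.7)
My plan is to exhibit an explicit capitulation policy whose induced Markov chain contains an exponentially long ``trap'', and then apply a distinguishing-event lower bound on the mixing time. Fix $p_1 = 4/5$ (any smaller $p_1$ only strengthens the bound, with $5/4$ replaced by $1/p_1$) and take player $1$'s policy to be $C(\ell_1, \ell_2) = 0$ whenever $\ell_2 < d$, $C(\ell_1, d) = 1$, and capitulation target $s = d$. Thus player 1 never capitulates voluntarily and, when forced at the truncation boundary, restarts at $(0, d)$. The state $(0, d)$ is reachable from $(0, 0)$ by $d$ consecutive player-$2$ wins, and on the top row $\mathcal{T} = \{(i, d) : 0 \leq i \leq d\}$ the dynamics are a biased walk with reset: from $(i, d)$ with $i < d$ the chain moves to $(i+1, d)$ with probability $p_1$ and loops back to $(0, d)$ with probability $p_2$, while the unique exit from $\mathcal{T}$ is from $(d, d)$ via a player-$1$ win, landing on $(0, 0)$.

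The core estimate is a first-step analysis of $T(i) = \E_{(i, d)}[\tau_{(d, d)}]$, with $T(d) = 0$ and $T(i) = 1 + p_1 T(i+1) + p_2 T(0)$; the substitution $U(i) = T(i) - T(0)$ reduces the recurrence to $U(i) = 1 + p_1 U(i+1)$ and yields
\[
T(0) \;=\; \frac{1 - p_1^d}{p_1^d(1 - p_1)} \;=\; 5\bigl((5/4)^d - 1\bigr).
\]
More importantly, since every failed attempt returns the chain exactly to $(0, d)$, the number of attempts before $\mathcal{T}$ is exited is geometric with parameter $p_1^{d+1}$, so the first-exit time $\tau$ satisfies $\P_{(0, d)}[\tau > t] \geq (1 - p_1^{d+1})^t$. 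Taking $t = \lfloor \tfrac{1}{4}(5/4)^d \rfloor$ and using the elementary estimate $(1 - (4/5)^{d+1})^{(5/4)^d/4} \to e^{-1/5} > 1/2$ gives $\P_{(0, d)}[X_t \in \mathcal{T}] > 1/2$ for every $d$ large enough.

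A renewal comparison then controls the stationary mass of the trap: outside $\mathcal{T}$ the chain returns to $(0, 0)$ in $O(1)$ expected time, whereas entering $\mathcal{T}$ requires $d$ consecutive player-$2$ wins from $(0, 0)$, so the expected time in $\mathcal{M}\setminus \mathcal{T}$ between consecutive visits to $\mathcal{T}$ is at least $5^d$, while the expected time per visit to $\mathcal{T}$ is $\Theta((5/4)^d)$; hence $\pi(\mathcal{T}) = O((1/4)^d)$ and in particular $\pi(\mathcal{T}) < 1/4$ for $d$ large. Combining,
\[
\|P^{t}((0, d), \cdot) - \pi \|_{\TV} \;\geq\; \P_{(0, d)}[X_t \in \mathcal{T}] - \pi(\mathcal{T}) \;>\; \tfrac{1}{2} - \tfrac{1}{4} \;=\; \tfrac{1}{4},
\]
and the definition of the mixing time gives $t_{\mix}(1/4) > \tfrac{1}{4}(5/4)^d$. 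The main obstacle is keeping the constants sharp: one must avoid any loss in the geometric-tail estimate on $\tau$, which is what the memoryless restart structure inside $\mathcal{T}$ (each failed attempt re-starting cleanly at $(0, d)$) precisely enables.
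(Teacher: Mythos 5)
Your first half — the policy choice, the observation that escaping the top row $\mathcal{T}=\{(i,d)\}$ requires $d+1$ consecutive player-1 wins, and the geometric tail bound $\P_{(0,d)}[\tau>t]\geq(1-p_1^{d+1})^t$ — is correct and matches the first half of the paper's argument. The gap is in your control of $\pi(\mathcal{T})$. Your renewal comparison rests on the claim that ``entering $\mathcal{T}$ requires $d$ consecutive player-2 wins from $(0,0)$,'' and this is false: player 1's wins inside the region $\{\ell_2<d\}$ only advance $\ell_1$ (resetting to $(0,0)$ when the branches equalize), so the chain reaches row $d$ along any ballot-type path in which player 2 stays strictly ahead, e.g.\ $(0,0)\to(0,1)\to(1,1)\to(1,2)\to\cdots\to(1,d)$. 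The per-excursion probability of hitting row $d$ is therefore not $p_2^{d}=5^{-d}$ but a sum $\sum_z \binom{d+z}{z}_{\mathrm{ballot}}\,p_2^{d}p_1^{z}$, which is only bounded above by roughly $\mathrm{poly}(d)\,(4p_2)^d=(4/5)^d\,\mathrm{poly}(d)$. Consequently the expected time spent outside $\mathcal{T}$ per renewal cycle is only guaranteed to be $\Omega((5/4)^d/\mathrm{poly}(d))$ — the \emph{same} exponential order as the $\Theta((5/4)^d)$ you correctly compute for the time spent inside $\mathcal{T}$. Your conclusion $\pi(\mathcal{T})=O((1/4)^d)$ does not follow, and with the crude path-counting bounds available, the ratio of the two expected durations is only controlled up to $\mathrm{poly}(d)$ factors, which is not enough to certify $\pi(\mathcal{T})<1/4$.

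The paper avoids having to estimate $\pi(\mathcal{T})$ at all. It runs \emph{two} distinguishing arguments: one from the starting state $(0,d)$ against the set $S=\mathcal{M}\setminus\mathcal{T}$, using exactly your escape bound and yielding $t_{\mix}(\varepsilon)\geq(\pi(S)-\varepsilon)/p_1^{d}$; and a second one from $(0,0)$ against $S^c=\mathcal{T}$, bounding the probability of climbing to row $d$ by a union bound over ballot paths (via the Catalan number $C(d-1)\leq 4^{d-1}$), yielding $t_{\mix}(\varepsilon)\geq(\pi(S^c)-\varepsilon)/(4p_2)^{d}$. Since $\pi(S)+\pi(S^c)=1$, one of the two is at least $1/2$, and with $p_1=4/5$ both denominators equal $(4/5)^d$, so either branch gives $t_{\mix}(1/4)\geq\frac14(5/4)^d$. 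To repair your proof you would either need to adopt this dichotomy (in which case you must add the second, ballot-path escape estimate), or prove a genuinely sharp upper bound on the probability of reaching row $d$ per excursion (of order $(4p_1p_2)^d=(16/25)^d$ rather than $(4/5)^d$), which is substantially more delicate than the heuristic you gave.
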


\begin{proof}
Consider the capitulation policy $(C,s)$ for player 1 given as follows: $s = d$, $C(\ell_1,\ell_2) = 1$ if $\ell_2 = d$, and zero otherwise.
We denote by $\pi$ the stationary distribution of the associated Markov chain $(\mathcal{M},P)$.
Let $S\subseteq \mathcal{M}$ a subset of states given by $S=\{ (\ell_1,\ell_2) : \ell_2\leq d-1 \}$.
In what follows we denote $\pi(S) = \sum_{(\ell_1,\ell_2)\in S} \pi(\ell_1,\ell_2)$ and $P^t((0,d), S) = \sum_{(\ell_1,\ell_2)\in S} P^t((0,d),(\ell_1,\ell_2))$.
Then, observe that for any positive integer $t$ we have
\begin{align*}
	\| P^t((0,d),\cdot) - \pi \|_{\TV} &\geq    \pi(S) - P^t((0,d), S)\\
	&= \pi(S) - \P(X_t \in S | X_0 = (0,d))\\
	&\geq \pi(S) - \P(X_1 \in S\vee\cdots\vee X_t\in S | X_0 = (0,d))\\
	&= \pi(S) - \P(\tau_{(0,d)}(S) \leq t)\\
	&= \pi(S) - \P(\tau_{(0,d)}(0,0) \leq t),
\end{align*}
where $\tau_{(0,d)}(S)$ and $\tau_{(0,d)}(0,0)$ denote the hitting times of $S$ and $(0,0)$ starting from $(0,d)$, respectively. 
Furthermore, $\tau_{(0,d)}(0,0)\leq t$ if and only if there is a sequence of $d+1$ consecutive winnings for player 1 within the $t$ turns. 
For every $i\in \{0,1,\ldots,t-d-1\}$, let $A_i$ be the event in which player 1 wins consecutively from turn $i$ to turn $i+d$.
Thus, by the union bound we have that
\begin{align*}
	\P(\tau_{(0,d)}(0,0)\leq t) &= \P\left(\bigcup_{i=0}^{t-d-1} A_i\right)\leq \sum_{i=0}^{t-d-1} \P(A_i)= (t-d)p^{d+1},
\end{align*}
where $p$ is the probability that player 1 is winning the mining race.
In particular, by setting $t = t_{\rm mix}(\varepsilon)$, we have that $\varepsilon\geq \pi(S) - (t_{\rm mix}(\varepsilon)-d)p^{d+1}$ and therefore
\[
t_{\rm mix}(\varepsilon) \geq \frac{\pi(S) - \varepsilon}{p^{d+1}} + d\geq \frac{\pi(S) - \varepsilon}{p^{d}}.
\]
Similarly, we have that
\begin{align*}
	\| P^t((0,0),\cdot) - \pi \|_{\TV} &\geq \pi(S^c) - P^t((0,0),S^c)\geq \pi(S^c) - \P(\tau_{(0,0)}(S^c) \leq t).
\end{align*}
We have that $\tau_{(0,0)}(S^c)\leq t$ if and only if there is a sequence of $d+z$ turns with $z\leq d$ within the $t$ turns starting from $(0,0)$ such that the following holds:
i) Player 2 wins $d$ mining races, ii) player 1 wins $z$ mining races and iii) there is no sub-sequence where player 1 has more wins than player 2.
For any such pair $(d,z)$, let $B_{d,z}$ the event in which the previous three conditions hold and let $C_{d,z}$ be the event in which only conditions i) and ii) hold.
Let us denote by $B_s$ the event in which there is a sequence of $d+z$ turns with $z\leq d$ within the $t$ turns starting from $(0,0)$ at turn $s\leq t-d$.
In particular, we have that $B_s=\cup_{i=0}^{d-1}B_{d,z}$. 
Then, we have that
\begin{align*}
	\P(B_s) &\leq \sum_{z=0}^{d-1} \P (B_{d,z})\leq\sum_{z=0}^{d-1} \P (C_{d,z}) = \sum_{z=0}^{d-1}\binom{d+z}{z}q^{d}p^z,
\end{align*}
Now, using the hockey-stick identity, we have that
\begin{align*}
	\P(B_s) &\leq \sum_{z=0}^{d-1}\binom{d+z}{z}q^{d}p^z\\
	&\leq q^d \sum_{z=0}^{d-1}\binom{d+z}{z}\\
	&= q^d\binom{2d}{d-1}= \frac{2d(2d-1)}{d+1}q^d\cdot \frac{1}{d}\binom{2d-2}{d-1}= \frac{2d(2d-1)}{d+1}q^d C(d-1)\leq \frac{d^2}{d+1}q^d 4^d,
\end{align*}
where $C(n)$ is the $n$-th Catalan number, which is 
bounded from above by $4^n$. 
Therefore, we have that
\begin{align*}
	\P(\tau_{(0,0)}(S^c) \leq t)&= \P\left( \bigcup_{s=0}^{t-d} B_s \right)\leq \sum_{s=0}^{t-d}\P(B_s)\leq \sum_{s= 0}^{t-d}\frac{d^2}{d+1}q^d 4^d\leq (t-d+1)\frac{d^2}{d+1}q^d 4^d.
\end{align*}
Thus, by setting $t= t_{\rm mix}(\varepsilon)$ we conclude that
\[
\varepsilon\geq \pi(S^c) - (t_{\rm mix}(\varepsilon)-d+1)\frac{d^2}{d+1}q^d 4^d,
\]
which we can rewrite as
\[
t_{\mix}(\varepsilon) \geq \frac{d+1}{d^2}\frac{(\pi(S^c)-\varepsilon)}{(4q)^d} +d-1 \geq \frac{\pi(S^c)-\varepsilon}{(4q)^d}.
\]
Since $\pi(S)+ \pi(S^c) = 1$, one of these values has to be greater than $1/2$ and therefore by taking $\varepsilon=1/4$, $p=4/5$ and $q = 1-p = 1/5$, we deduce that
\[
t_{\mix}(1/4) \geq \frac{1}{4}\left(\frac{5}{4}\right)^d.\qedhere
\]
\end{proof}

\subsection{Hitting Time of the State $(d,d)$}\label{subsec:Hitting}

In this section, we are interested to know whether, for a given strategy of player 1, the truncated model with maximum depth $d$ is reliable or not. In our model, parallel blocks can not be both validated, since whenever a player mines a path of $d+1$ blocks, the other one is forced to capitulate. 
If the strategic player actually faces the situation of potentially parallel validated blocks, he or she might not capitulate due to the high losses of doing so. 
This undesirable situation arises only when the Markov chain is at $(d,d)$.

Ideally, strategic miners should have capitulation policies for which the state $(d,d)$ is unreachable. 
However, as we can see in the lattice representation of Figure \ref{fig:lat-rep}, this is not the case in general. 
We propose an alternative criterion, which considers the hitting time of the state $(d,d)$. 
If the hitting time of this state is very large (i.e. beyond human scale), the associated strategy can be {\it safely} played in the truncated model.

For example, if the average time $\tau_b$ of a mining race is 5 minutes, a time horizon of $T=10^8$ corresponds to about a millennia. Therefore, any capitulation strategy for which the hitting time of $(d,d)$ is greater that $T$, can be considered as safe. 
For the Bitcoin time scale of 100 years, $T=10^8$ means that the complete sequential game must be played about 10 times before $(d,d)$ is hit. 
On the other hand, a time horizon of $T=10^4$ corresponds to about one month, and clearly they should be considered as \emph{unsafe}. 
We show that, under certain conditions, unsafe strategies exist for this time horizon.
The benchmark $\tau_b \geq 5$ minutes  follows from the fact that $\tau_b = \dist\overline{\tau}$ (which is easily deduced from Proposition \ref{prop:ConvergenceAverage}) and that in the examples we provide next, $\dist \geq 1/2$. 

Formally, when the chain starts at state $(0,0)$, the hitting time of the state $(d,d)$ corresponds to  $\mathcal{T}(d,d)=\mathbb{E}(\min\{n\in \N:X_n=(d,d)\})$. 
We say that a capitulation policy is safe for $T>0$ if $\mathcal{T}(d,d)\geq T$.
Recall that a round of the game is a set of transitions from a capitulation transition (or the beginning of the game) to the next capitulation transition.
The length of a round, that is, the number of transitions between two capitulation transitions, is bounded above by $2d$, which is the number of interior transitions needed to go from $(0,0)$ to $(d,d)$.
Thus, if $\calR$ is the number of rounds before hitting $(d,d)$ for the first time, we have
\[
\E(\calR) \leq \mathcal{T}(d,d)\leq 2d \cdot \E(\calR).
\]
Then, up to a constant factor of $d$, studying $\mathcal{T}(d,d)$ can be reduced to estimating the value $\E(\calR)$.

\begin{proposition}
	\label{prop:E(R)}
	Let $N(\ell_1,\ell_2)$ be the number of interior lattice paths from $(\ell_1,\ell_2)$ to $(d,d)$ in the lattice representation of $(\M,P)$.
	Then 
	\[\frac{N(0,s) p_2^{2s}}{N(0,0)^2p_1^dp_2^d} \leq \E(\calR) \leq \frac{N(0,0)}{N(0,s)^2p_1^dp_2^{d+s}}.\]
\end{proposition}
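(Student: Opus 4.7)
The plan is to treat the evolution of \emph{rounds} as an auxiliary Markov chain on the two entry states $\{(0,0),(0,s)\}$ with $(d,d)$ as an absorbing state, so that each step of this reduced chain corresponds to exactly one round of the game, and $\E(\calR)$ is the expected absorption time starting from $(0,0)$.

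First I would compute the one-round absorption probabilities into $(d,d)$. A round starting at an entry state $(\ell_1,\ell_2)$ is absorbed at $(d,d)$ precisely when the sequence of mining-race outcomes traces an interior lattice path from $(\ell_1,\ell_2)$ to $(d,d)$; each such path consists of $d-\ell_1$ player-$1$ wins and $d-\ell_2$ player-$2$ wins and hence carries probability $p_1^{d-\ell_1}p_2^{d-\ell_2}$. Summing over all $N(\ell_1,\ell_2)$ interior paths yields
\[
a:=N(0,0)\,p_1^{d}p_2^{d} \qquad\text{and}\qquad b:=N(0,s)\,p_1^{d}p_2^{d-s}.
\]

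For the lower bound I would use that in any single round the one-step probability of absorption at $(d,d)$ is at most $\max(a,b)$, giving $\E(\calR)\ge 1/\max(a,b)$. The elementary inequality $N(0,0)\ge N(0,s)$ (every interior path from $(0,s)$ to $(d,d)$ extends to one from $(0,0)$ by prepending $s$ vertical steps) together with the algebraic identity $a=(N(0,0)/N(0,s))\,b\,p_2^{s}$ lets me rewrite the resulting bound in the symmetric form $\frac{N(0,s)p_2^{2s}}{N(0,0)^2 p_1^{d}p_2^{d}}$ stated in the proposition.

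For the upper bound I would analyse pairs of consecutive rounds starting from $(0,0)$ and lower-bound the probability of the event $E$: ``round $1$ capitulates to $(0,s)$ and round $2$ is absorbed at $(d,d)$''. Round $2$ contributes a factor $b$. For round $1$, the transition $(0,0)\to(0,s)$ can be realised in particular by the trajectory in which player $2$ wins the first $s$ mining races (moving the chain along the vertical axis to $(0,s)$) and the subsequent interior path terminates at a capitulation edge of player $1$. A combinatorial injection, matching such trajectories with pairs formed by an interior path from $(0,s)$ to a state in $\partial_2\M$ followed by the capitulation step, shows that this transition probability is at least of order $p_2^{2s}\,N(0,s)/N(0,0)$. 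Combining both factors yields $\P(E)\ge p_2^{2s}\,\tfrac{N(0,s)}{N(0,0)}\,b$, and since a success of $E$ ends the game within two rounds, we obtain an upper bound on $\E(\calR)$ that, after the substitutions for $a$ and $b$, simplifies to $\frac{N(0,0)}{N(0,s)^{2}p_1^{d}p_2^{d+s}}$.

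The main obstacle will be the combinatorial estimate for the transition probability $(0,0)\to(0,s)$ inside a single round: this quantity depends on the particular capitulation policy $(C,s)$ through the structure of $\partial_2\M$, yet has to be controlled purely in terms of the path counts $N(0,0)$ and $N(0,s)$. I expect this step to require a careful bijection-type argument between interior-path prefixes and interior paths based at $(0,s)$, and to be the most delicate part of the proof; the remaining steps are essentially geometric-series estimates for absorption times of the reduced round chain.
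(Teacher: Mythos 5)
Your setup and your lower bound essentially reproduce the paper's argument: the one-round absorption probabilities $a=N(0,0)p_1^dp_2^d$ (from $(0,0)$) and $b=N(0,s)p_1^dp_2^{d-s}$ (from $(0,s)$) are both at most $N(0,0)p_1^dp_2^{d-s}$, so $\calR$ stochastically dominates a geometric variable and $\E(\calR)\ge p_2^{s}/(N(0,0)p_1^dp_2^{d})$, which implies the stated (weaker, symmetrized) bound because $N(0,s)p_2^{s}\le N(0,0)$. That part is correct.

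The upper bound is where the proposal has a genuine gap. Your two-round event $E$ hinges on the claim that the within-round transition probability from $(0,0)$ to $(0,s)$ is at least of order $p_2^{2s}N(0,s)/N(0,0)$. This is not proved, and it is not clear how your injection could establish it: $N(0,0)$ and $N(0,s)$ count interior paths terminating at $(d,d)$, whereas the event ``the round ends with a capitulation of player $1$'' is governed by interior paths terminating on $\partial_2\M$, and the relation between these two quantities depends on the policy $C$ in a way your sketch does not control (for instance, if player $1$ capitulates only at $\ell_2=d$, the single-round capitulation probability bears no evident relation to the ratio $N(0,s)/N(0,0)$). Even granting the claim, the ``absorbed within two rounds'' argument yields at best $\E(\calR)\le 2N(0,0)/(N(0,s)^2p_1^dp_2^{d+s})$, a factor of $2$ worse than the statement. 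The detour is also unnecessary: by the same symmetry you exploited for the lower bound, the absorption probability in \emph{every} round is at least $N(0,s)p_1^dp_2^d$ uniformly over the entry state (use $N(0,0)\ge N(0,s)$ when starting from $(0,0)$, and $p_2^{d-s}\ge p_2^{d}$ when starting from $(0,s)$), hence $\E(\calR)\le 1/(N(0,s)p_1^dp_2^d)\le N(0,0)/(N(0,s)^2p_1^dp_2^{d+s})$, again using $N(0,s)p_2^{s}\le N(0,0)$. This is precisely the route the paper takes, phrased there as $\E(\calR)\le \rho_{\max}/\rho_{\min}^2$ with $\rho_{\min}=N(0,s)p_1^dp_2^d$ and $\rho_{\max}=N(0,0)p_1^dp_2^{d-s}$.
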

\begin{proof}
In the $n$-th round, the probability of hitting $(d,d)$ is given by
\[
\rho_n = 
\begin{cases}
	\nu_n(0,0) N(0,0) p_1^dp_2^d + \nu_n(0,s) N(0,s) p_1^dp_2^{d-s} & \text{ if } s > 0, \\
	\nu_n(0,0) N(0,0) p_1^dp_2^d & \text{ if } s = 0,
\end{cases}
\]
where $\nu_n(\ell_1,\ell_2)$ is the probability that the round starts at $(\ell_1,\ell_2)$.
In particular, we have $1=\nu_n(0,0) + \nu_n(0,s)$ if $s > 0$ and $\nu_n(0,0)$ if $s = 0$.
We have that $N(0,s) \leq N(0,0)$, since for every path $\gamma$ from $(0,s)$ to $(d,d)$, there is a path from $(0,0)$ to $(d,d)$ consisting in the vertical path from $(0,0)$ to $(0,s)$ and then $\gamma$.
Then, we have that 
\begin{align*}
	\rho_n
	& \leq N(0,0)\left(\nu_n(0,0) p_1^dp_2^d + \nu_n(0,s)p_1^dp_2^{d-s} \right) \\
	& \leq N(0,0) p_1^dp_2^{d-s}(\nu_n(0,s) + \nu_n(0,s)) = N(0,0) p_1^dp_2^{d-s}.
\end{align*}
Similarly, we have that 
$\rho_n \geq N(0,s) p_1^dp_2^d$.
On the other hand, the expected number of rounds needed to hit $(d,d)$ is equal to 
\[
\E(\calR) = \sum_{n=1}^{\infty} n \rho_n \prod_{k=1}^{n-1} (1-\rho_k),
\]
and therefore we can upper bound the value $\mathbb{E}(\calR)$ by
\begin{align*}
	\E(\calR)
	& \leq \sum_{n=1}^{\infty} n N(0,0)p_1^dp_2^{d-s} \left(1-N(0,s)p_1^dp_2^d\right)^{n-1} \\
	& = N(0,0)p_1^dp_2^{d-s}\left(\frac{1}{N(0,s)p_1^dp_2^d}\right)^2
	= \frac{N(0,0)}{N(0,s)^2p_1^dp_2^{d+s}}.
\end{align*}
Similarly, we can lower bound the value of $\mathbb{E}(\calR)$ to get
$\E(\calR)
\geq N(0,s) p_2^{2s}/(N(0,0)^2p_1^dp_2^d)$. 
\end{proof}
Note that when $s=0$ we get that the expected value of $\calR$ is exactly given by $1/N(0,0)p_1^dp_2^d$.
Furthermore, when $s < d/2$, the function $p_2^{2s}(p_1p_2)^{-d}$ attains its minimum at $p_1=d/(2(d-s))$ (subject to $p_1+p_2=1$ and $p_1,p_2\ge 0$). 
Therefore, using Proposition~\ref{prop:E(R)}, we get directly the following lower bound on $\E(\calR)$. 
\begin{corollary}\label{cor:LB}
	For $s < d/2$, we get that\;
	$\displaystyle \E(\calR)
	\geq \frac{N(0,s)}{N(0,0)^2}4^{d-s} \frac{(d-s)^{2(d-s)}}{d^d(d-2s)^{d-2s}}$.
\end{corollary}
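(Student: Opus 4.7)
The plan is to derive this directly from the lower bound in Proposition \ref{prop:E(R)}, which states that $\E(\calR) \geq N(0,s) p_2^{2s}/(N(0,0)^2 p_1^d p_2^d)$. Since the left-hand side depends on the miners' hash powers through $p_1$, to produce a clean bound that makes no reference to $p_1$ one needs to minimize the right-hand side over admissible $(p_1, p_2)$ with $p_1 + p_2 = 1$, $p_1, p_2 \in (0,1)$. Using $s < d/2$, one has $d - 2s > 0$, and the right-hand side can be rewritten as
\[
\frac{N(0,s)}{N(0,0)^2}\cdot \frac{1}{p_1^d p_2^{d-2s}},
\]
so it suffices to maximize the quantity $f(p_1) = p_1^d (1-p_1)^{d-2s}$ on $(0,1)$.

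First I would observe that $f$ is strictly log-concave on $(0,1)$ and vanishes at both endpoints, hence attains a unique interior maximum. Setting the derivative of $\log f$ to zero gives
\[
\frac{d}{p_1} = \frac{d-2s}{1-p_1},
\]
which yields $p_1^\ast = d/(2(d-s))$ and consequently $p_2^\ast = (d-2s)/(2(d-s))$, exactly matching the claim in the remark preceding the corollary.

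Next, I would substitute $p_1^\ast$ and $p_2^\ast$ into $f$. A short computation gives
\[
f(p_1^\ast) = \frac{d^d}{(2(d-s))^d}\cdot \frac{(d-2s)^{d-2s}}{(2(d-s))^{d-2s}} = \frac{d^d(d-2s)^{d-2s}}{4^{d-s}(d-s)^{2(d-s)}},
\]
using that $2^d\cdot 2^{d-2s} = 4^{d-s}$ and $(d-s)^d(d-s)^{d-2s} = (d-s)^{2(d-s)}$. Taking reciprocals and inserting back into the lower bound of Proposition \ref{prop:E(R)} yields
\[
\E(\calR) \;\geq\; \frac{N(0,s)}{N(0,0)^2}\cdot \frac{4^{d-s}(d-s)^{2(d-s)}}{d^d(d-2s)^{d-2s}},
\]
which is the stated inequality.

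There is no real obstacle here: the argument is pure calculus after Proposition \ref{prop:E(R)}. The only points that require care are checking that $d-2s > 0$ (so the optimization is well posed and the exponents remain positive), and that the critical point $p_1^\ast$ indeed lies in $(0,1)$, which is immediate since $0 < d < 2(d-s)$. The rest is routine substitution.
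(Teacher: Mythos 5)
Your argument is correct and is exactly the route the paper takes: it minimizes the lower bound of Proposition~\ref{prop:E(R)} over $p_1$ subject to $p_1+p_2=1$, locates the critical point at $p_1^\ast = d/(2(d-s))$, and substitutes back, which is precisely the computation sketched in the remark preceding the corollary. Your verification that $d-2s>0$ and that $p_1^\ast\in(0,1)$ under $s<d/2$ is the right care to take, and the algebra checks out.
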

Recall that the maximum gap tolerance is given by $\overline{g} = \max_{(\ell_1,\ell_2)\in \M} (\ell_2-\ell_1)$.
Note that the maximum is necessarily attained in $\partial_2 \M$, that is, at some state $(\ell_1,\ell_2)$ such that $C(\ell_1,\ell_2) = 1$.
We say that a capitulation policy $(C,s)$ has \emph{constant gap tolerance} $g \in \{1,\dots,d\}$, if $g(\ell) = g$ for all $\ell\in \{0,\ldots,d\}$.
We consider the following proposition borrowed from the integer lattice theory.
\begin{proposition}[{\cite[Theorem~10.3.3]{krattenthaler2015lattice}}]
	\label{prop:counting}
	Let $(C,s)$ be a capitulation policy with constant gap tolerance $g \in \{1,\dots,d\}$.
	Then, we have that
	\begin{align*}
		N(0,0) &= \sum_{k\in\Z} \left(\binom{2d}{d - k(g+2)} - \binom{2d}{d - k(g+2)+g+1}\right),\\
		N(0,s) &= \sum_{k\in\Z} \left(\binom{2d-s}{d - k(g+2)} - \binom{2d-s}{d - s - k(g+2)+g+1}\right).
	\end{align*}
\end{proposition}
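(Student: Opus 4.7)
The plan is to translate the combinatorial quantity $N(\ell_1,\ell_2)$ into a classical count of lattice paths confined to a strip, and then invoke the reflection principle.

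I would first establish the dictionary between the two problems. Under the Frontier strategy of player $2$ and a capitulation policy $(C,s)$ of constant gap tolerance $g$, an interior transition of $(\M,P)$ from a state $(\ell_1,\ell_2)$ is a right step $(\ell_1,\ell_2)\to(\ell_1+1,\ell_2)$ exactly when $\ell_1<\ell_2$ (so that player $2$ does not capitulate) and an up step $(\ell_1,\ell_2)\to(\ell_1,\ell_2+1)$ exactly when $\ell_2-\ell_1<g$ (so that player $1$ does not capitulate). Consequently, an interior lattice path from $(\ell_1,\ell_2)$ to $(d,d)$ in the lattice representation of Figure~\ref{fig:lat-rep} is precisely a monotone lattice path with unit right and up steps whose vertices all lie in the strip
\[
\mathcal{S} = \{(x,y)\in\Z^2 :\ 0 \le y-x \le g\},
\]
and therefore $N(\ell_1,\ell_2)$ equals the number of such paths. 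The truncation $\ell_1,\ell_2\le d$ imposes no extra restriction here, since any monotone path from $(0,0)$ or $(0,s)$ to $(d,d)$ automatically stays in $[0,d]^2$.

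Second, I would count strip-confined paths via the reflection principle, with two absorbing barriers placed immediately outside $\mathcal{S}$, namely $y-x=-1$ and $y-x=g+1$, at distance $g+2$ from each other. Iterated reflections of the source point across these two barriers produce a bi-infinite family of image sources indexed by $k\in\Z$, and by inclusion–exclusion the number of strip-confined paths from $(\ell_1,\ell_2)$ to $(d,d)$ equals the signed sum over $k$ of the unrestricted path counts from each image to $(d,d)$. Each unrestricted count is a binomial coefficient $\binom{n}{r}$, where $n$ is the total number of steps and $r$ is the number of right steps, determined by the horizontal displacement between the image and $(d,d)$. Starting from $A=(0,0)$: images produced by an even number of reflections contribute $\binom{2d}{d-k(g+2)}$ with a plus sign, while images produced by an odd number of reflections contribute $\binom{2d}{d-k(g+2)+g+1}$ with a minus sign. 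Starting from $A=(0,s)$: the total number of steps becomes $2d-s$ and the analogous bookkeeping yields the binomial coefficients $\binom{2d-s}{d-k(g+2)}$ and $\binom{2d-s}{d-s-k(g+2)+g+1}$, matching the statement. This is exactly the content of Krattenthaler's Theorem~10.3.3, which I would cite to conclude.

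The main subtlety I expect is the bookkeeping of the reflections: for each $k\in\Z$, one must check carefully that the horizontal displacement of the $k$-th image, and the sign induced by the parity of the reflection sequence, align with the binomial coefficients and signs in the stated formulas. Beyond that, the sums are automatically finite because the binomial coefficients vanish once their lower index leaves $[0,2d]$ (respectively $[0,2d-s]$), so no additional analytic issue arises.
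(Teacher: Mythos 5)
The paper offers no proof of this proposition---it is quoted verbatim from Krattenthaler's survey (Theorem~10.3.3)---so there is nothing internal to compare against; your argument is correct and is exactly the intended route. The only content not already in the cited theorem is the dictionary you set up (under player 2 playing Frontier and a constant gap tolerance $g$, interior transitions of $(\M,P)$ are precisely the monotone unit steps whose endpoints stay in the strip $0\le y-x\le g$, the truncation $\ell_1,\ell_2\le d$ being automatic for paths ending at $(d,d)$), and your iterated-reflection bookkeeping is consistent with the stated binomial sums---for instance, when $g\ge d$ the only surviving terms are $\binom{2d}{d}-\binom{2d}{d-1}$, recovering the Catalan number for paths weakly above the diagonal, as they must.
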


Thanks to Proposition~\ref{prop:counting}, we obtain the following corollary.
\begin{corollary}
	For $d = 100$ and a tolerance of $T=10^8$ we have that constant gap tolerance strategies are safe for $0 \leq s \leq g \leq 4$, for any $p_1 \in (0,1)$.
\end{corollary}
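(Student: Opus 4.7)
My plan is to reduce the claim to a finite numerical verification that combines Corollary~\ref{cor:LB} with the explicit lattice-path counts of Proposition~\ref{prop:counting}. Recall that the remarks preceding Proposition~\ref{prop:E(R)} give the sandwich $\E(\calR) \leq \mathcal{T}(d,d) \leq 2d\,\E(\calR)$, so it suffices to establish $\E(\calR) \geq T = 10^{8}$ for each of the admissible pairs $(s,g)$ with $0\leq s\leq g\leq 4$.

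Since $s \leq g \leq 4 < 50 = d/2$, Corollary~\ref{cor:LB} applies and delivers, uniformly in $p_1\in (0,1)$, the lower bound
\[
\E(\calR) \;\geq\; \frac{N(0,s)}{N(0,0)^{2}}\,4^{d-s}\,\frac{(d-s)^{2(d-s)}}{d^{d}(d-2s)^{d-2s}}.
\]
For $d=100$ the deterministic factor $4^{d-s}(d-s)^{2(d-s)}/(d^{d}(d-2s)^{d-2s})$ is of order at least $10^{57}$ for each $s\in\{0,\dots,4\}$, so the whole task reduces to showing that the combinatorial ratio $N(0,s)/N(0,0)^{2}$ is not small enough to undo this factor.

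To control that ratio I would invoke Proposition~\ref{prop:counting}, which writes $N(0,0)$ and $N(0,s)$ as finite alternating binomial sums. For $d=100$ and $g\in\{1,2,3,4\}$, each sum contains only the finitely many indices $k$ for which $0\leq d-k(g+2)\leq 2d$ (at most $\lceil 2d/(g+2)\rceil+1$ of them), and each term is a single binomial coefficient $\binom{200}{\cdot}$ or $\binom{200-s}{\cdot}$ computable exactly; the edge case $g=s=0$ reduces to a single nonzero contribution. With these values in hand, substituting into the displayed lower bound produces a concrete number for each of the $15$ admissible pairs $(s,g)$, which I would tabulate and check against $10^{8}$.

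The main obstacle is the size of the computation rather than any conceptual step: the bound is a ratio of astronomically large quantities and must be handled with enough precision to ensure that the exponent of $4$ is indeed dominant. The reason this ultimately succeeds is that, for $g\leq 4$, the interior of the chain lives inside a strip of width at most $g+1\leq 5$, so $N(0,0)$ grows like $c\lambda^{2d}$ with $\lambda<2$ (the dominant eigenvalue of the associated transfer matrix). Since $\lambda^{2}<4$, the factor $4^{d-s}$ coming from Corollary~\ref{cor:LB} wins by an exponentially large margin in $d$, and the $d=100$ evaluation then leaves substantial room above $T=10^{8}$ in every one of the $15$ cases, proving that each such strategy is safe.
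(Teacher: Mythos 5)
Your proposal is correct and follows essentially the same route the paper intends: the corollary is obtained by combining the lower bound $\mathcal{T}(d,d)\geq \E(\calR)$ and Corollary~\ref{cor:LB} (which is already uniform in $p_1$) with the explicit binomial-sum evaluations of $N(0,0)$ and $N(0,s)$ from Proposition~\ref{prop:counting}, and then numerically checking the finitely many pairs $(s,g)$ for $d=100$ against $T=10^8$. Your closing heuristic — that paths confined to a strip of width $g+1$ grow like $\lambda^{2d}$ with $\lambda=2\cos(\pi/(g+2))<2$, so the $4^{d-s}$ factor dominates — correctly explains why the verification succeeds and why the safety threshold sits near $g=5$, exactly as the paper reports.
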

In fact, for $d = 100$, there are additional cases, namely, $g=5$ and $s \leq 3$, or $(g,s) = (6,0)$, that also define safe strategies.
%
Moreover, it is possible to determine several safe strategies for $p_1$ in a given range as shown in Table~\ref{tab:safe-strategies}.
The examples shown are not extensive as for each one of the ranges for $p_1$ in Table~\ref{tab:safe-strategies}, it is possible to take higher values for $s$ (higher than $s_{\max}$ in Table~\ref{tab:safe-strategies}) for some smaller values of $g$ (smaller than $g_{\max}$ in Table~\ref{tab:safe-strategies}).
We can also exhibit examples of strategies which are unsafe, that is, capitulation policies such that the $\mathcal{T}(d,d)$ is small, namely, less than $T=10^4$ (about one month under $\tau_b = 5$ minutes).
Since $\mathcal{T}(d,d) \leq 2d \cdot \E(\calR)$, by Proposition~\ref{prop:E(R)}, we get 
\[
\mathcal{T}(d,d) \leq 2d \frac{N(0,0)}{N(0,s)^2p_1^dp_2^{d+s}}.
\]
Together with Proposition~\ref{prop:counting}, we can determine several unsafe strategies for a given $p_1$ as shown in Table~\ref{tab:unsafe-strategies}.
We remark that for $s\ge 2$, our methods are not accurate enough in order to detect possible unsafe strategies in that case.
\begin{table}[h]
	\centering
	\begin{tabular}{lrr}
		$p_1 \leq$  & $g_{\max}$ & $s_{\max}$ \\
		\hline\hline
		$0.45$ & $5$ & $3$ \\[1ex]
		$0.40$ & $5$ & $5$ \\[1ex]
		$0.\overline{33}$ & $d$ & $1$ \\[1ex]
		$0.30$ & $d$ & $5$ \\[1ex]
		$0.25$ & $d$ & $14$ \\[1ex]
		$0.20$ & $d$ & $30$ \\[1ex]
		$0.10$ & $d$ & $77$ \\[1ex]
		$0.05$ & $d$ & $d$ \\
	\end{tabular}
	\hspace{6em}
	\begin{tabular}{lrr}
		$p_1 \geq$ & $g_{\max}$ & $s_{\max}$ \\
		\hline\hline
		$0.65$ & $9$ & $1$ \\[1ex]
		$0.60$ & $6$ & $2$ \\[1ex]
		$0.\overline{66}$ & $d$ & $0$ \\[1ex]
		$0.70$ & $d$ & $2$ \\[1ex]
		$0.75$ & $d$ & $5$ \\[1ex]
		$0.80$ & $d$ & $9$ \\[1ex]
		$0.90$ & $d$ & $17$ \\[1ex]
		$0.95$ & $d$ & $23$ \\
	\end{tabular}
	\caption{Examples of safe strategies for $d=100$ and $T = 10^8$. For $p_1$ in a given range, $(s,g)$ is safe for every $g \leq g_{\max}$ and $s \leq \min\{g,s_{\max}\}$.}
	\label{tab:safe-strategies}
\end{table}
\begin{table}[h]
	\centering
	\begin{tabular}{c|c|*{11}{|c}}
		\multicolumn{2}{c||}{$p_1$} & $0.43$ & $0.44$ & $0.45$ & $0.46$ & $0.47$ & $0.48$--$0.52$ & $0.53$ & $0.54$ & $0.55$ & $0.56$ & $0.57$\\
		\hline\hline
		$s=0$ & $g_{\min}$ &
		$-$ &
		$19$ &
		$16$ &
		$14$ &
		$14$ &
		$13$ &
		$14$ &
		$14$ &
		$16$ &
		$19$ &
		$-$ \\
		\hline
		$s=1$ & $g_{\min}$ &
		$-$ &
		$-$ &
		$21$ &
		$17$ &
		$16$ &
		$15$ &
		$16$ &
		$19$ &
		$-$ &
		$-$ &
		$-$
	\end{tabular}
	\caption{Examples of unsafe strategies for $d=100$ and $T = 10^4$. For the given values of $p_1$ and $s$, $(s,g)$ is unsafe for every $g \geq g_{\min}$.}
	\label{tab:unsafe-strategies}
\end{table}
\section{A Market Share Case Study  \label{sec:Revenues}}
In what follows we analyse a family of capitulation regimes through the lens of the objective functions $R_i$ and $G_i$ defined in Section~\ref{subsec:PayoffFunctions}.
More precisely, we do this for the family of capitulation policies with constant gap tolerance $g\ge 1$ and $s=0$.
We further assume that $d \gg g$, which means that, in practice, we consider $d=\infty$.
By Proposition~\ref{prop:ConvergenceAverage}, we have that
\[R_i = \langle \pi,\hat{r}_i\rangle -  c_i\overline{\tau}\langle \pi_P, \hat{r}_1 + \hat{r}_2\rangle \; \text{ and } \; G_i = \frac{\langle \pi, \hat{r}_i\rangle}{\langle \pi_P, \hat{r}_1 + \hat{r}_2\rangle},\]
where $\pi = \pi_P$ is the corresponding stationary distribution.
Thus, our analysis can be reduced to the computation of $\rho_i = \langle \pi_P, \hat{r}_i\rangle$.
In particular, we have 
\begin{equation}
	R_i = \rho_i - c_i\overline{\tau}(\rho_1+\rho_2)\; \text{ and }\;G_i = \frac{\rho_i}{\rho_1 + \rho_2}\label{eq:GR}
\end{equation}
for each $i\in \{1,2\}$.
In order to compute explicitly the values of the objective functions $R_i$ and $G_i$, we compute the values $\rho_i$, that are completely determined by the corresponding stationary distributions.
To this end, we use a lattice path enumeration approach (see~\cite{krattenthaler2015lattice} for a survey).

In the stationary distribution, the value at any state of the Markov chain can be described as a linear combination of the value of $\pi_P$ at incoming states.
Thus, $\pi_P(\ell_1,\ell_2)$ is a linear combination of the number of interior lattice paths from $(0,0)$ and $(0,s)$ to $(\ell_1,\ell_2)$, weighted by the corresponding probabilities of such paths to occur.
Recall that in capitulation policies of constant gap tolerance $g \geq 1$ with fixed $s = 0$, player $1$ capitulates every time that player $2$ has mined $g$ blocks more than $1$.
Thus, the Markov chain has states $\M = \{(\ell,\ell),(\ell,\ell+k): \ell, k \in \NN, \, k \leq g\}$, and the capitulation states are 
$\partial_1 \M = \{(\ell,\ell): \ell \in \NN\}$ and $\partial_2 \M = \{(\ell,\ell+g): \ell \in \NN\}$.
See in Figure~\ref{fig:const-gap} the corresponding representation.
\begin{figure}[h!]
	\centering
	\includegraphics[width=.35\textwidth,height=.35\textheight]{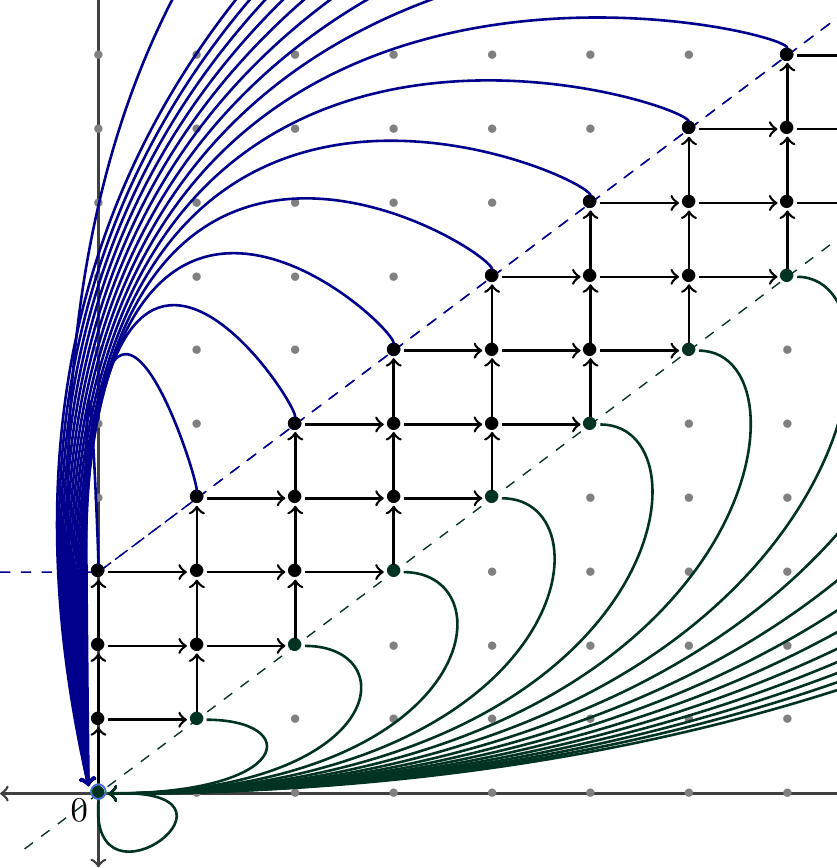}
	\caption{Lattice representation of a Markov chain of constant gap tolerance $g=3$, $s=0$ and $d=\infty$.}
	\label{fig:const-gap}
\end{figure}

When $d=\infty$, the states used to describe the game might be infinitely many. In this case, the capitulation policies might induce Markov chains $(\mathcal{M},P)$ with infinitely many states as well. While the chain $(\mathcal{M},P)$ keeps irreducible by construction, the ergodicity property described in Proposition \ref{prop:ConvergenceAverage} does not apply directly, since the existence of a stationary distribution is not guaranteed. However, the following proposition allow us to replicate the results of Proposition \ref{prop:ConvergenceAverage} if both players have bounded gap tolerance functions.

\begin{proposition}\label{prop:d-infty} 
Let $(C_1,s_1)$ and $(C_2,s_2)$ be two capitulation policies for a mining game with depth $d=\infty$. Suppose that both players have bounded gap tolerance, that is, there exists $\overline{g}$ such that for every $(\ell_1,\ell_2)\in \N\times\N$ the following holds:
When $\ell_2-\ell_1\geq \overline{g}$, we have  $C_2(\ell_1,\ell_2) = 1$,
and when $\ell_1-\ell_2\geq \overline{g}$, we have $C_1(\ell_1,\ell_2) = 1$.
Then, the induced Markov chain $(\mathcal{M},P)$ is positive recurrent, which yields that the conclusions of Lemma \ref{lemma:irreducible} and Proposition \ref{prop:ConvergenceAverage} hold.
\end{proposition}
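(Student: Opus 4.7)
The plan is to establish that $(\mathcal{M},P)$ is irreducible and positive recurrent; from there the conclusions of Lemma~\ref{lemma:irreducible} and Proposition~\ref{prop:ConvergenceAverage} follow by the same arguments given in their proofs, supplemented by a routine integrability check for the reward vectors $\hat r_1,\hat r_2$ under the stationary distribution.

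Irreducibility follows the same pattern as in Lemma~\ref{lemma:irreducible}. From any $(\ell_1,\ell_2)\in\mathcal{M}$, a finite sequence of consecutive wins by player~$2$ either triggers an intermediate capitulation of player~$1$ (resetting to $(0,s_1)$), or drives the chain into the region $\ell_2-\ell_1\geq\overline{g}$ where by hypothesis $C_2=1$; in the latter case, a subsequent win by player~$1$ resets to $(s_2,0)$. Either way, $F:=\{(s_2,0),(0,s_1)\}$ is reachable with positive probability from every state. Since $\mathcal{M}$ is by construction the set of states reachable from $F$, the chain is irreducible.

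For positive recurrence, the strategy is to prove the uniform bound $\sup_{x\in\mathcal{M}}\mathbb{E}_x[\tau_F]<\infty$, where $\tau_F$ is the first hitting time of $F$. Let $\Sigma:=\{(\ell_1,\ell_2):|\ell_2-\ell_1|<\overline{g}\}$ denote the central strip. As long as the chain stays in $\Sigma$ with both capitulation indicators vanishing at the visited states, the difference $D_n:=\ell_2^{(n)}-\ell_1^{(n)}$ evolves as a $\pm 1$ random walk with step $-1$ of probability $p_1$ and step $+1$ of probability $p_2$; by a standard gambler's ruin estimate, its exit time from $(-\overline{g},\overline{g})$ has expectation bounded by a constant depending only on $\overline{g},p_1,p_2$. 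Once the chain exits $\Sigma$, it enters a region where some $C_i=1$ and stays there until a win by the opposite player forces a reset to $F$; this reset occurs in geometric time with parameter at least $\min(p_1,p_2)$. Combining these two contributions yields the uniform bound on $\mathbb{E}_x[\tau_F]$, and positive recurrence then follows from Foster's criterion or a direct regenerative argument at the reset times.

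The same analysis shows that excursions between consecutive visits to $F$ have an exponential moment, so the stationary distribution $\pi$ has exponentially decaying mass in $\ell_1+\ell_2$. Since $\hat r_1,\hat r_2$ grow at most linearly in $\ell_1+\ell_2$, this implies $\hat r_1,\hat r_2\in L^1(\pi)$, and the ergodic theorem combined with dominated convergence permits us to reproduce verbatim the argument of Proposition~\ref{prop:ConvergenceAverage}. I expect this integrability check to be the main technical hurdle, as it is precisely the step where the infinite-state setting departs from the finite one treated in Lemma~\ref{lemma:irreducible}.
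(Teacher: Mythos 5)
Your argument is correct, but it reaches positive recurrence by a genuinely different route than the paper. The paper's proof is a one-shot block argument: from \emph{any} state, $2\overline{g}+1$ consecutive wins by player $1$ force the chain through $(s_2,0)$, so the return time to that single state is dominated by $(2\overline{g}+1)$ times a geometric random variable of parameter $p_1^{2\overline{g}+1}$; positive recurrence of one state plus irreducibility then gives everything via the standard theory. Your two-phase decomposition (gambler's-ruin exit from the central strip, then reset attempts at the boundary) proves the stronger statement $\sup_{x\in\mathcal{M}}\mathbb{E}_x[\tau_F]<\infty$ and yields constants that are polynomial rather than exponential in $\overline{g}$ when $p_1$ and $p_2$ are both bounded away from $0$, at the price of more bookkeeping. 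One claim needs tightening: after exiting the strip the chain does \emph{not} necessarily stay in the region where some $C_i=1$ (a win by the non-capitulating player can push it back into the strip), so the second phase should be phrased as a regeneration argument --- each visit to the boundary region independently triggers a reset to $F$ with probability at least $\min(p_1,p_2)$, and the strip excursions between consecutive such visits have uniformly bounded expected length --- rather than as a single geometric waiting time. Finally, your integrability check for $\hat r_1,\hat r_2$ addresses a point the paper passes over in silence when it invokes the ergodic theorem on an infinite state space with unbounded rewards; it is worth keeping, though there is a shortcut that avoids tail estimates on $\pi$: since at most one block is mined per turn, $0\le f_{i,n}\le 1$ uniformly, and the ergodic theorem applied to the nonnegative function $r_i$ then forces $\langle \pi_Q,r_i\rangle\le 1<\infty$ directly.
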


\begin{proof} Since $(\mathcal{M},P)$ is irreducible, it is sufficient, according to \cite[Theorem 1.7.7]{Norris1998}, to show that at least one state $m\in \mathcal{M}$ is positive recurrent.
Let us consider the initial state $x=(s_2,0)$, and let $T$ be its first passage time, that is, $T = \inf\{ n\geq 1\ :\ X_n = (s_2,0) \}$.
Recall that $(s_2,0)$ is said to be positive recurrent if
$\E_{x}(T) <\infty$.
Define the random variables $(Z_n)_{n\in\N}$ over $(\Omega,\mathcal{F},\P_x)$ (see the definition of this probability space at Section \ref{subsec:Description}) as
\[
Z_n(m,\omega) = 1_{\{\omega_n = 1\}}(m,\omega),
\]
namely, $Z_n=1$ if player 1 won the mining race at turn $n$. Observe that, for any state $(\ell_1,\ell_2)\in\mathcal{M}$, in order to pass by $(s_2,0)$ it is enough for player 1 to have $2\overline{g}+1$ consecutive wins of the mining races. Let us define the random variables $(U_k)_{k\in\N}$ as
\[
U_k = \prod_{i=0}^{2g}Z_{(2\overline{g}+1)k+i}.
\]
We have that $(U_k)_{k\in \N}$ are independent, that $U_k$ follows a Bernoulli distribution of parameter $q = p_1^{2\overline{g}+1}$, and 
when $U_k(m,\omega) = 1$ we have that $T(m,\omega)\leq (2\overline{g}+1)(k+1)$.
Then, we can define the stopping time $T_{\text{seq}} = \inf\{ k\in \N :\ U_k = 1 \}$,
and in view of the previous implication, we can write
$T(m,\omega) \leq (2\overline{g}+1)(T_{\text{seq}}(m,\omega)+1)$.
Thus, by noting that $T_{\text{seq}}$ follows a geometric distribution of parameter $q$, we conclude that
\[
\E_x(T) \leq (2\overline{g}+1)(\E_x(T_{\text{seq}})+1)<\infty,
\]
and therefore $x=(s_2,0)$ is positive recurrent as we wanted to. Then, the conclusions follow by \cite[Theorem 1.7.7]{Norris1998} and \cite[Theorem 1.10.2]{Norris1998}.
\end{proof}
Let $L(\ell_1,\ell_2)$ be the number of interior lattice paths from $(0,0)$ to $(\ell_1,\ell_2)$.
We have
$\pi(\ell_1,\ell_2) =  \pi(0,0)L(\ell_1,\ell_2)p_1^{\ell_1} p_2^{\ell_2}$, where $\ell_2 \in \{\ell_1, \ell_1+1, \dots, \ell_1 + g\}$.
By \cite[Theorem~10.3.4]{krattenthaler2015lattice}, the number of lattice paths from $(0,0)$ to $(\ell,\ell+m)$ that are (weakly) above the line $\{(x,x):x\in \RR\}$ and below the line $\{(x,x+g):x\in \RR\}$, is given by
\begin{equation}
	L(\ell,\ell+m) = \frac{2}{g+2} \sum_{k=1}^{g+1} \sin\left(\frac{\pi k}{g+2}\right) \cdot \sin\left(\frac{\pi k(m+1)}{g+2}\right) \cdot \left(2\cos \left(\frac{\pi k}{g+2}\right)\right)^{2\ell+m}.\label{eq:paths}
\end{equation}
In the following proposition we summarize the exact values obtained for $\rho_1$ and $\rho_2$.
\begin{proposition}\label{prop:cap-policies}
Consider the capitulation policies with constant gap tolerance $g\ge 1$, $s=0$ and $d=\infty$, and let $\rho_i=\langle \pi_P,\hat r_i\rangle$ for each $i\in \{1,2\}$. 
Then, we have
\begin{align*}
\rho_1 &= {\tiny \frac{p_1}{\Gamma} \displaystyle\sum_{k=1}^{g+1} \frac{\sin^2(\frac{\pi k}{g+2})}{(1 - 4 p_1p_2 \cos^2 (\frac{\pi k}{g+2}))^2}},\text{ and}\\
\rho_2 &= \frac{p_2^{g+1}}{\Gamma} \displaystyle\sum_{k=1}^{g+1}(-1)^{k-1}{\tiny \frac{\sin^2(\frac{\pi k}{g+2}) (2\cos (\frac{\pi k}{g+2}))^{g}}{(1 - 4 p_1p_2\cos^2(\frac{\pi k}{g+2}))^2}\left(g\left(1 - 4p_1p_2\cos^2 \left(\frac{\pi k}{g+2}\right)\right)+1\right)},
\end{align*}
where $\Gamma=\sum_{k=1}^{g+1} \sin\left(\frac{\pi k}{g+2}\right) \sum_{m = 0}^{g}\sin\left(\frac{\pi k(m+1)}{g+2}\right)
\frac{ 2^m\cos ^m \left(\frac{\pi k}{g+2}\right)p_2^{m}}{1 - 4\cos^2 \left(\frac{\pi k}{g+2}\right) p_1 p_2 }$.
\end{proposition}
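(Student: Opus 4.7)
The first step is to specialize the reward vectors. Since $s_1=s_2=0$, the capitulation boundaries are $\partial_1\mathcal{M}=\{(\ell,\ell):\ell\ge 0\}$ and $\partial_2\mathcal{M}=\{(\ell,\ell+g):\ell\ge 0\}$, so
\[
\hat r_1(\ell,\ell)=p_1(\ell+1),\qquad \hat r_2(\ell,\ell+g)=p_2(\ell+g+1),
\]
and both vectors vanish at every other state. Proposition~\ref{prop:d-infty} gives existence and uniqueness of the stationary distribution $\pi=\pi_P$, and since $(0,0)$ is the unique restart state, the identity $\pi(\ell_1,\ell_2)=\pi(0,0)L(\ell_1,\ell_2)p_1^{\ell_1}p_2^{\ell_2}$ holds: any realization of $\pi$ at a general state is obtained by summing the probabilities of all interior lattice paths from $(0,0)$ to that state.

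The second step is to reduce $\rho_i=\langle\pi,\hat r_i\rangle$ to an explicit infinite series,
\[
\rho_1=\pi(0,0)\,p_1\sum_{\ell\ge 0}(\ell+1)\,L(\ell,\ell)(p_1p_2)^{\ell},\qquad \rho_2=\pi(0,0)\,p_2^{g+1}\sum_{\ell\ge 0}(\ell+g+1)\,L(\ell,\ell+g)(p_1p_2)^{\ell},
\]
and then substitute formula~\eqref{eq:paths} for $L$. Setting $\theta_k=\pi k/(g+2)$, the identity $\sin((g+1)\theta_k)=\sin(\pi k-\theta_k)=(-1)^{k-1}\sin\theta_k$ yields
\[
L(\ell,\ell)=\tfrac{2}{g+2}\sum_{k=1}^{g+1}\sin^2\theta_k\,(2\cos\theta_k)^{2\ell},\qquad L(\ell,\ell+g)=\tfrac{2}{g+2}\sum_{k=1}^{g+1}(-1)^{k-1}\sin^2\theta_k\,(2\cos\theta_k)^{2\ell+g}.
\]

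The third step is to swap the finite sum over $k$ with the series in $\ell$, which is legal provided $4p_1p_2\cos^2\theta_k<1$ for every $k\in\{1,\dots,g+1\}$; this holds since $4p_1p_2\le 1$ and $\cos^2\theta_k<1$ for every such $k$. The inner series are then closed-formed via
\[
\sum_{\ell\ge 0}(\ell+1)x^{\ell}=\frac{1}{(1-x)^{2}},\qquad \sum_{\ell\ge 0}(\ell+g+1)x^{\ell}=\frac{1+g(1-x)}{(1-x)^{2}},
\]
evaluated at $x=4p_1p_2\cos^2\theta_k$. An identical interchange-and-sum argument applied to the normalization $\sum_{\ell,m}\pi(\ell,\ell+m)=1$ gives $\pi(0,0)^{-1}=\frac{2}{g+2}\Gamma$, i.e.\ $\pi(0,0)=(g+2)/(2\Gamma)$. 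Plugging this value into the expressions above cancels the prefactor $\frac{2}{g+2}$ and delivers the claimed closed forms for $\rho_1$ and $\rho_2$.

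\textbf{Main obstacle.} The calculations themselves are routine arithmetic-geometric series manipulations; the two delicate points are (i) justifying the order swap, which reduces to the strict bound $4p_1p_2\cos^2\theta_k<1$ on every Chebyshev node, and (ii) correctly tracking the alternating sign $(-1)^{k-1}$ in $\rho_2$ produced by $\sin((g+1)\theta_k)$ but absent from $\rho_1$. The extra factor $g(1-4p_1p_2\cos^2\theta_k)+1$ in $\rho_2$ is not mysterious once one recognizes it as the closed form of the shifted series $\sum_{\ell\ge 0}(\ell+g+1)x^{\ell}$.
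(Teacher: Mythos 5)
Your proposal is correct and follows essentially the same route as the paper's proof: normalize the stationary distribution via the lattice-path generating function to obtain $\pi(0,0)=(g+2)/(2\Gamma)$, then exchange the sums over $\ell$ and $k$ and close the weighted geometric series for $\rho_1$ and $\rho_2$. The only differences are presentational --- you invoke the closed forms $\sum_{\ell\ge0}(\ell+1)x^\ell$ and $\sum_{\ell\ge0}(\ell+g+1)x^\ell$ directly where the paper writes $G_i(x)=xF_i'(x)+cF_i(x)$ and differentiates, and you make explicit the convergence bound $4p_1p_2\cos^2\theta_k<1$ and the sign identity $\sin((g+1)\theta_k)=(-1)^{k-1}\sin\theta_k$ that the paper leaves implicit.
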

The proof of Proposition \ref{prop:cap-policies} can be found in the Appendix.
By February 2021, the value of 1 BTC was close to 47000 USD \footnote{\url{https://bitcoinmagazine.com/}} and the social energy consumption per year for Bitcoin mining has been estimated in 77.78 TWh per year.\footnote{\url{https://digiconomist.net/bitcoin-energy-consumption/}} Assuming the hash power of a player to be proportional to the energy consumption, and that the price of a kWh is 0.01 USD, we estimate the marginal costs of each $i\in \{1,2\}$ as
$c_i \approx 0.01/(6.25\cdot 4.7\cdot 10^4)\times (77.78\cdot 10^{9})/(365\cdot 24 \cdot 60)\times p_i \approx 0.005p_i \,\mathrm{[units/min]}$.  
When both players try to maximize their asymptotic expected revenue neglecting their costs, by Corollary~\ref{cor:EqWithoutCost}, Frontier is a Nash equilibrium. However, under Frontier regimes, we have that $\tau_b = \overline{\tau} = 10\,\text{min}$ and thus, the average cost of player $i$ becomes $c_i\tau_b \approx 0.05p_i$ units per turn, which is the 5\% of the average reward $p_i$. Thus, the question of profitability of selfish mining is definitely relevant and it has already been addressed by different authors (see e.g. \cite{profitability2020albrecher,profitability2020davidson}). 
The estimation of $c_i$ is rough, and we simply provide it to illustrate the ideas of this section. 
\begin{figure}[h!t]
	\centering
	\includegraphics[width=.45\textwidth]{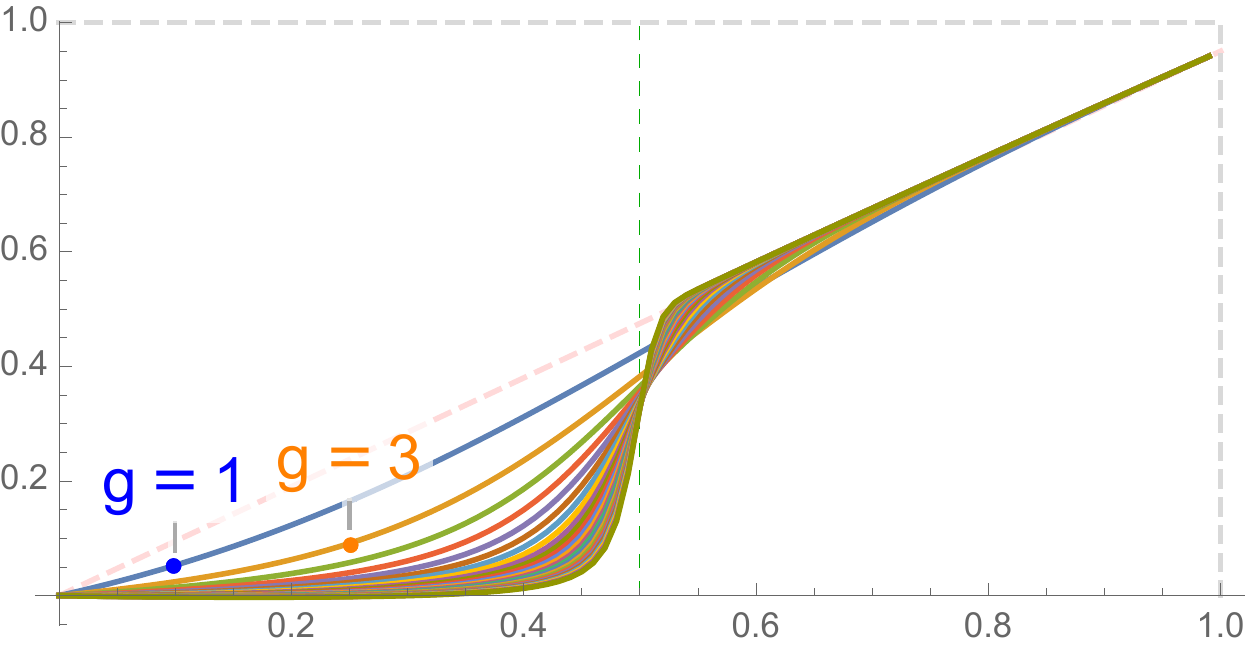}
	\hspace{.05\textwidth}
	\includegraphics[width=.45\textwidth]{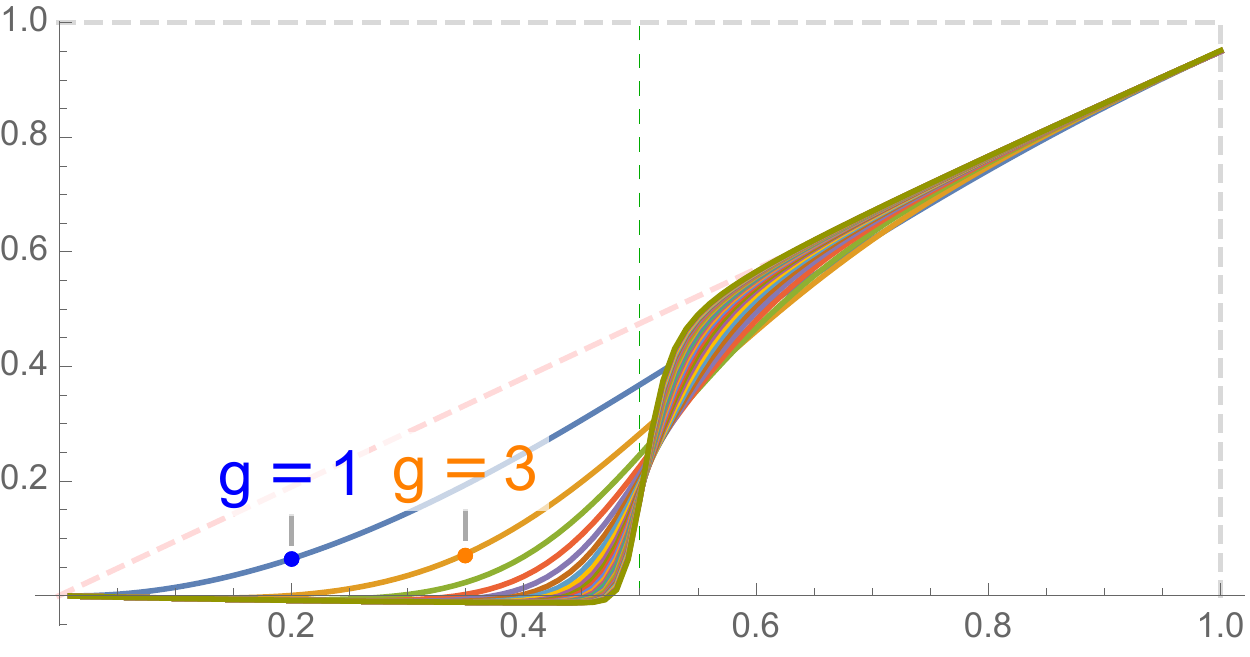}
	\caption{$R_1$ (left) and $R_2$ (right) as a function of $p_1$ and $p_2$, respectively, with $g \in \{1, 3, \dots, 49\}$ and $s=0$.
	}
	\label{fig:R-const}
\end{figure}

\begin{figure}[h!t]
	\centering
	\includegraphics[width=.45\textwidth]{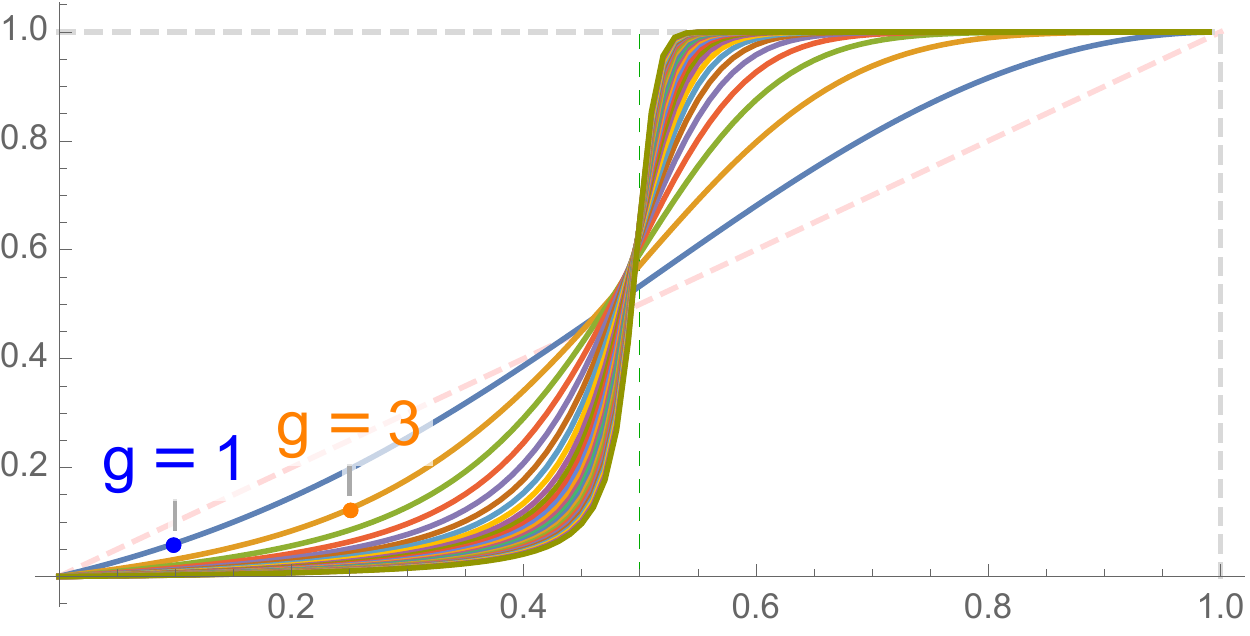}
	\hspace{.05\textwidth}
	\includegraphics[width=.45\textwidth]{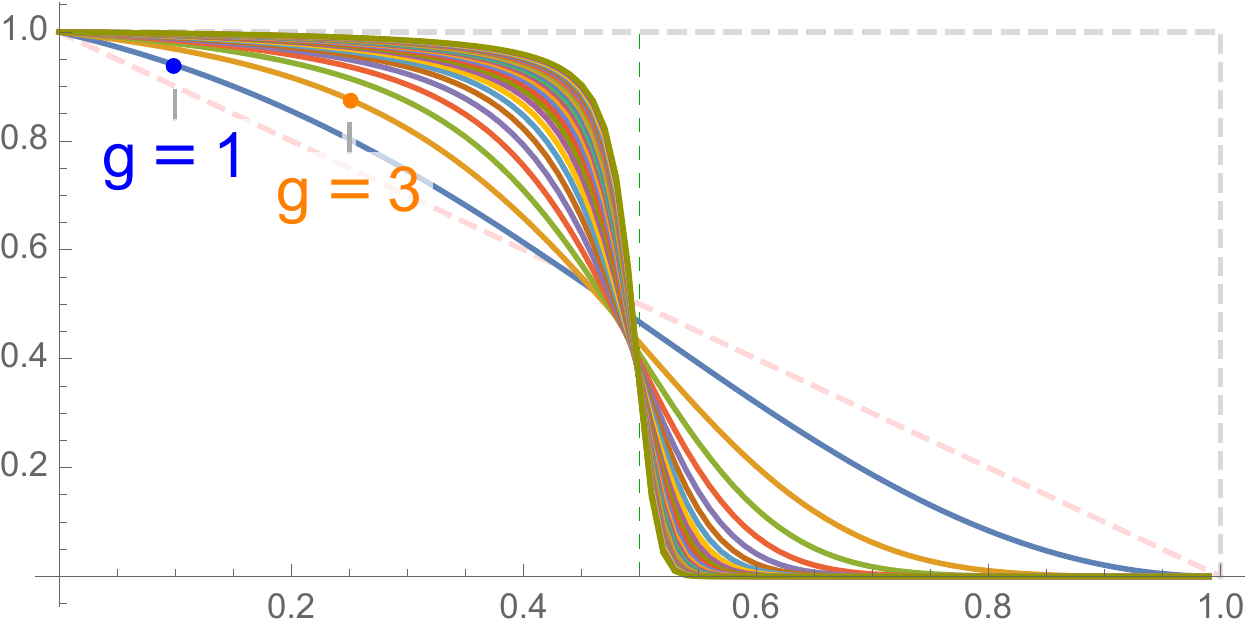}
	\caption{$G_1$ (left) and $G_2$ (right) as a function of $p_1$ and $p_2$, respectively, with $g \in \{1, 3, \dots, 49\}$ and $s=0$.}
	\label{fig:G-const}
\end{figure}

When everybody plays Frontier strategy, there is only one state, namely $(0,0)$, and we have $\rho_i = p_i$ for each player.
In this scenario, we get $R_i = 0.95p_i$ and $G_i= p_i$.
Observing how function $G_1$ evolves with respect to $p_1$ for different tolerances $g$, we see a similar behavior as in \cite{Koutsoupias2016}: around $p_1\approx 0.42$ strategic mining under immediate release becomes profitable (see Figure \ref{fig:G-const}). 
However, for $R_1$ a different behavior is observed: strategic mining is only profitable for values of $p_1$ larger than $0.5$ (se Figure \ref{fig:R-const}). Furthermore, very large tolerances (beyond the maximum gap described in Table \ref{tab:rapid-mixing}) are required for profitability when $p_1<0.6$. 
This suggests that in terms of profitability, strategic miners should decide using the average revenues per turn, rather than the ratio of validated blocks. 

Despite the discussion above, an interesting phenomenon occurs when we look at the revenues $R_2$ of the honest player (see Figure \ref{fig:RevHonest}). For sufficiently large tolerances, they become negative when $p_1$ is large enough. This means that a sufficiently powerful strategic miner with a very stubborn strategy (i.e. large gap tolerance) could eliminate honest ones from the market. For this to happen, rapid mixing is essential. We observe that if the strategic miner has $p_1>0.65$ and applies a capitulation policy of constant gap $g=10$,  then, within months (see Corollary \ref{coro:p-mix}), honest miners will lose money and quit the mining game in the long run. For even larger tolerances, the amount of power required to perform such strategy seems to approximate $0.5$. However, it is not clear if larger tolerances could be applied since fast mixing is not ensured. This situation deserves a further study since a monopolistic miner jeopardizes the very essence of decentralized cryptocurrencies. We aim to explore these perspectives in a subsequent work.

\begin{figure}
	\centering
	\includegraphics[width=0.45\textwidth]{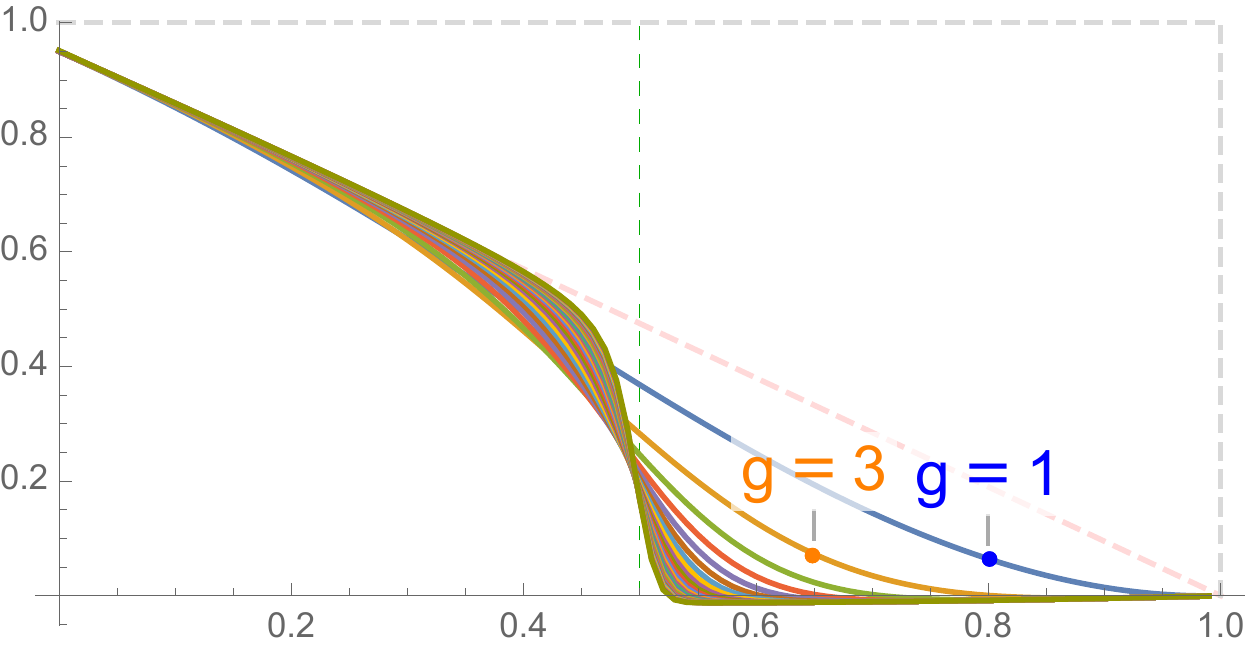}
	\caption{$R_2$ with respect to the power of the strategic miner ($p_1$) for different capitulation regimes of constant gap. For $p_1>0.5$, $R_2$ becomes negative for sufficiently large $g$.}
	\label{fig:RevHonest}
\end{figure}

\bibliographystyle{abbrv}
{\small \bibliography{bibMarkovModels}}


\appendix
\section{Appendix}

\begin{proof}[Proof of Proposition \ref{prop:cap-policies}] Since $\pi$ is a stationary distribution, and thanks to the path enumeration \eqref{eq:paths}, we have
		\begin{align*}
			1
			& = \pi(0,0) \sum_{\ell = 0}^{\infty}\sum_{m = 0}^{g} L(\ell,\ell+m) p_1^\ell p_2^{\ell+m} \\
			& = \pi(0,0) \sum_{\ell = 0}^{\infty}\sum_{m = 0}^{g}\frac{2}{g+2} \sum_{k=1}^{g+1} \sin\left(\frac{\pi k}{g+2}\right) \cdot \sin\left(\frac{\pi k(m+1)}{g+2}\right) \cdot \left(2\cos \left(\frac{\pi k}{g+2}\right)\right)^{2\ell+m}p_1^\ell p_2^{\ell+m} \\
			& = \pi(0,0) \frac{2}{g+2} \sum_{k=1}^{g+1} \sin\left(\frac{\pi k}{g+2}\right) \sum_{m = 0}^{g}\sin\left(\frac{\pi k(m+1)}{g+2}\right) \sum_{\ell = 0}^{\infty} \left(2\cos \left(\frac{\pi k}{g+2}\right)\right)^{2\ell+m}p_1^\ell p_2^{\ell+m} \\
			& = \pi(0,0) \frac{2}{g+2} \sum_{k=1}^{g+1} \sin\left(\frac{\pi k}{g+2}\right) \sum_{m = 0}^{g}\sin\left(\frac{\pi k(m+1)}{g+2}\right)
			\frac{ 2^m\cos ^m \left(\frac{\pi k}{g+2}\right)p_2^{m}}{1 - 4\cos^2 \left(\frac{\pi k}{g+2}\right) p_1 p_2 },
		\end{align*}
	%
	and therefore we get
	\[
	\pi(0,0) = \left(\frac{2}{g+2} \sum_{k=1}^{g+1} \sin\left(\frac{\pi k}{g+2}\right) \sum_{m = 0}^{g}\sin\left(\frac{\pi k(m+1)}{g+2}\right)
	\frac{ 2^m\cos ^m \left(\frac{\pi k}{g+2}\right)p_2^{m}}{1 - 4\cos^2 \left(\frac{\pi k}{g+2}\right) p_1 p_2 }
	\right)^{-1}
	\]
	and $\pi(\ell,\ell+m) =  \pi(0,0)L(\ell,\ell+m)p_1^{\ell} p_2^{\ell+m}$, for $m \in \{0, 1, \dots, g\}$.
	It follows that
	\[\rho_1 =p_1 \sum_{\ell=0}^{\infty} \pi(\ell,\ell) (\ell + 1)
	= p_1\pi(0,0) \sum_{\ell=0}^{\infty} L(\ell,\ell)p_1^{\ell} p_2^{\ell}(\ell + 1).\]
	We define
	$G_1(x)= \sum_{\ell=0}^\infty L(\ell,\ell)x^\ell (\ell+1)$, so that $\rho_1 = p_1 \pi(0,0) G_1(p_1p_2)$.
	Observe that $G_1(x) = xF_1'(x) +F_1(x)$, where
	\begin{align*}
		F_1(x)
		& = \sum_{\ell=0}^\infty L(\ell,\ell)x^\ell 
		= \sum_{\ell=0}^\infty \frac{2}{g+2} \sum_{k=1}^{g+1} \sin^2\left(\frac{\pi k}{g+2}\right) \cdot \left(2\cos \left(\frac{\pi k}{g+2}\right)\right)^{2\ell} x^\ell \\
		& =  \frac{2}{g+2} \sum_{k=1}^{g+1} \sin^2\left(\frac{\pi k}{g+2}\right) \sum_{\ell=0}^\infty \left(2\cos \left(\frac{\pi k}{g+2}\right)\right)^{2\ell} x^\ell 
		=  \frac{2}{g+2} \sum_{k=1}^{g+1} \frac{\sin^2\left(\frac{\pi k}{g+2}\right)}{1 - 4 x \cos^2 \left(\frac{\pi k}{g+2}\right)}.
	\end{align*}
	It follows that
	\[
	G_1(x) =  \frac{2}{g+2} \sum_{k=1}^{g+1} \frac{\sin^2\left(\frac{\pi k}{g+2}\right)}{\left(1 - 4 x \cos^2 \left(\frac{\pi k}{g+2}\right)\right)^2}
	\]
	and therefore we conclude that
	\[
	\rho_1 = \frac{p_1}{\Gamma} \displaystyle\sum_{k=1}^{g+1} \frac{\sin^2\left(\frac{\pi k}{g+2}\right)}{\left(1 - 4 p_1p_2 \cos^2 \left(\frac{\pi k}{g+2}\right)\right)^2}.
	\]
	Similarly, we have that
	\[\rho_2 =p_2 \sum_{\ell=0}^{\infty} \pi(\ell,\ell+g) (\ell + g + 1)
	= p_2\pi(0,0) \sum_{\ell=0}^{\infty} L(\ell,\ell+g)p_1^{\ell} p_2^{\ell+g} (\ell + g + 1)
	\]
	and consider
	$G_2(x)
	= \sum_{\ell=0}^\infty L(\ell,\ell+g)x^\ell (\ell+g+1)$
	so that $\rho_2 = p_2^{g+1} \pi(0,0) G_2(p_1p_2)$.
	We have that $G_2(x) = xF_2'(x)  + (g+1)F_2(x)$, where
	\begin{align*}
		F_2(x)
		& = \sum_{\ell=0}^\infty L(\ell,\ell+g)x^\ell 
		= \sum_{\ell=0}^\infty\frac{2}{g+2} \sum_{k=1}^{g+1} (-1)^{k-1}\sin^2\left(\frac{\pi k}{g+2}\right) \cdot \left(2\cos \left(\frac{\pi k}{g+2}\right)\right)^{2\ell+g} x^\ell \\
		& =  \frac{2}{g+2} \sum_{k=1}^{g+1}  (-1)^{k-1}\sin^2\left(\frac{\pi k}{g+2}\right) \left(2\cos \left(\frac{\pi k}{g+2}\right)\right)^{g} \sum_{\ell=0}^\infty \left(2\cos \left(\frac{\pi k}{g+2}\right)\right)^{2\ell} x^\ell \\
		& =  \frac{2}{g+2} \sum_{k=1}^{g+1}  (-1)^{k-1}\frac{\sin^2\left(\frac{\pi k}{g+2}\right) \left(2\cos \left(\frac{\pi k}{g+2}\right)\right)^{g}}{1 - 4 x\cos^2 \left(\frac{\pi k}{g+2}\right) }.
	\end{align*}
	Therefore, we have that $G_2(x)$ is equal to
	\begin{equation*}
		\frac{2}{g+2} \sum_{k=1}^{g+1}(-1)^{k-1}\frac{\sin^2\left(\frac{\pi k}{g+2}\right) \left(2\cos \left(\frac{\pi k}{g+2}\right)\right)^{g}}{\left(1 - 4 x\cos^2 \left(\frac{\pi k}{g+2}\right)\right)^2}
		\left(
		g\left(
		1 - 4x\cos^2 \left(\frac{\pi k}{g+2}\right)
		\right)
		+1
		\right),
	\end{equation*}
	from where we recover that
	\[
	\rho_2 = \frac{p_2^{g+1}}{\Gamma} \displaystyle\sum_{k=1}^{g+1}(-1)^{k-1}\frac{\sin^2\left(\frac{\pi k}{g+2}\right) \left(2\cos \left(\frac{\pi k}{g+2}\right)\right)^{g}}{\left(1 - 4 p_1p_2\cos^2 \left(\frac{\pi k}{g+2}\right)\right)^2}\left(g\left(1 - 4p_1p_2\cos^2 \left(\frac{\pi k}{g+2}\right)\right)+1\right).\qedhere
	\]
\end{proof}
\end{document}